\let\oldr@@t\r@@t
\def\r@@t#1#2{%
	\setbox0=\hbox{$\oldr@@t#1{#2\,}$}\dimen0=\ht0
	\advance\dimen0-0.2\ht0
	\setbox2=\hbox{\vrule height\ht0 depth -\dimen0}%
	{\box0\lower0.4pt\box2}}
\LetLtxMacro{\oldsqrt}{\sqrt}
\renewcommand*{\sqrt}[2][\ ]{\oldsqrt[#1]{#2}}
\theoremstyle{definition}
\newtheorem{theorem}{Theorem}[section]
\newtheorem{corollary}[theorem]{Corollary}
\newtheorem{proposition}[theorem]{Proposition}
\newtheorem{remark}[theorem]{Remark}
\numberwithin{equation}{section} 
\def\@seccntformat#1{\@ifundefined{#1@cntformat}%
	{\csname the#1\endcsname\quad}
	{\csname #1@cntformat\endcsname}
}
\newif\ifShowComments
\def\strutdepth{\dp\strutbox}
\def\druk#1{\strut\vadjust{\kern-\strutdepth
        {\vtop to \strutdepth{%
                \baselineskip\strutdepth\vss
                        \llap{\hbox{#1}\quad}\null}}}}
\title{\bf
Symmetric generalized Heckman models
}
\author{
\large
{Helton Saulo}, {Roberto Vila} and {Shayane S. Cordeiro}\\
{\small Department of Statistics, Universidade de Bras\'{i}lia, Bras\'{i}lia, Brazil}\\
}
\begin{document}
\maketitle

\begin{abstract}
The sample selection bias problem arises when a variable of interest is correlated with a latent variable, and involves situations in which the response variable had part of its observations censored. \cite{heckman76} proposed a sample selection model based on the bivariate normal distribution that fits both the variable of interest and the latent variable. Recently, this assumption of normality has been relaxed by more flexible models such as the Student-$t$ distribution \citep{Genton2012,lachosetal21}. The aim of this work is to propose generalized Heckman sample selection models based on symmetric distributions \citep{fkn:90}. This is a new class of sample selection models, in which variables are added to the dispersion and correlation parameters. A Monte Carlo simulation study is performed to assess the behavior of the parameter estimation method. Two real data sets are analyzed to illustrate the proposed approach.

\end{abstract}
\smallskip
\noindent
{\small {\bfseries Keywords.} {Generalized Heckman models $\cdot$ Symmetric distributions $\cdot$ Variable dispersion $\cdot$ Variable correlation.}}


\section{Introduction}

It is common in the areas of economics, statistics, sociology, among others, that in the sampling process there is a relationship between a variable of interest and a latent variable, in which the former is observable only in a subset of the population under study. This problem is called sample selection bias and was studied by \cite{heckman76}. The author proposed a sample selection model by joint modeling the variable of interest and the latent variable. The classical Heckman sample selection (classical Heckman-normal model) model received several criticisms, due to the need to assume bivariate normality and the difficulty in estimating the parameters using the maximum likelihood (ML) method, which led to the introduction of an alternative estimation method known as the two-step method; see \cite{heckman79}. Some studies on Heckman models have been done by \cite{Nelson1984}, \cite{Paarsch1984}, \cite{Manning1987}, \cite{Stolzenberg1990} and \cite{Yu1996}. These works suggested that the Heckman sample selection model can reduce or eliminate selection bias when the assumptions hold, but deviation from normality assumption may distort the results.

The normality assumption of the classical Heckman-normal model \citep{heckman76} has been relaxed by more flexible models such as the Student-$t$ distribution \citep{Genton2012,Ding2014,lachosetal21}, the skew-normal distribution \citep{Ogundimu_2016} and the Birnbaum-Saunders distribution \citep{bastosbarretosouza:21}. Moreover, the classical Heckman-normal model assumes that the dispersion and correlation (sample selection bias parameter) are constant, which may not be adequate. In this context, the present work aims to propose generalized Heckman sample selection models based on symmetric distributions \citep{fkn:90}. In the proposed model, covariates are added to the dispersion and correlation parameters, then we have covariates explaining possible heteroscedasticity and sample selection bias, respectively. Our proposed methodology can be seen as a generalization of the generalized Heckman-normal model with varying sample selection bias and dispersion parameters proposed by \cite{Bastos2021} and the Heckman-Student-$t$ model proposed by \cite{Genton2012}. We demonstrate that the proposed symmetric generalized Heckman model outperforms the generalized Heckman-normal and Heckman-Student-$t$ models in terms of model fitting, making it a practical and useful model for modelling data with sample selection bias.

The rest of this work proceeds as follows. In Section \ref{sec:2}, we briefly describe the bivariate symmetric distributions. We then introduce the symmetric generalized Heckman models. In this section, we also describe the maximum likelihood (ML) estimation of the model parameters. In Section \ref{sec:3}, we derive the generalized Heckman-Student-$t$ model, which is a special case of the symmetric generalized Heckman models. In Section \ref{sec:4}, we carry out a Monte Carlo simulation study for evaluating the performance of the estimators. In Section \ref{sec:5}, we apply the generalized Heckman-Student-$t$ to two real data sets to demonstrate the usefulness of the proposed model, and finally in Section \ref{sec:6}, we provide some concluding remarks.

\section{Symmetric generalized Heckman models}\label{sec:2}

Let $\boldsymbol{Y}=(Y_1,Y_2)^{\top}$ be a random vector following a bivariate symmetric (BSY) distribution \citep{fkn:90} with location (mean) vector
$\boldsymbol{\mu}=(\mu_1,\mu_2)^{\top}$, covariance matrix
\begin{align}\label{covariance matrix}
\boldsymbol{\Sigma}= 
\begin{pmatrix}
  \sigma_1^2 & \rho \sigma_1 \sigma_2  \\
 \rho \sigma_1 \sigma_2 & \sigma_2^2
 \end{pmatrix},
\end{align}
and density generator $g_c$,
with $\mu_i\in\mathbb{R}$, $\sigma_i>0$, for $i=1,2$.  We use the notation $\boldsymbol{Y}\sim {\rm BSY}(\boldsymbol{\mu},\boldsymbol{\Sigma},g_c)$. Then, the
probability density function (PDF) of $\boldsymbol{Y}\sim {\rm BSY}(\boldsymbol{\mu},\boldsymbol{\Sigma},g_c)$ is given by
\begin{equation}\label{eq:pdf:sym}
f_{\boldsymbol{Y}}(\boldsymbol{y}; \boldsymbol{\mu},\boldsymbol{\Sigma},g_c) 
= 
\frac{1}{|\boldsymbol{\Sigma}|^{1/2} { Z_{g_c}}}\, g_c \left((\boldsymbol{y} - \boldsymbol{\mu})^\top \boldsymbol{\Sigma}^{-1} (\boldsymbol{y} - \boldsymbol{\mu})\right), \quad \boldsymbol{y} \in \mathbb{R}^2,
\end{equation}
{ where $|\boldsymbol{\Sigma}|=\sigma_1^2\sigma_2^2(1-\rho^2)$ and $Z_{g_c}$ is a normalization constant so that $f_{\boldsymbol{Y}}$ is a PDF, that is,
	\begin{align*}
	Z_{g_c}
	=
	\int_{\mathbb{R}^2} 
	\frac{1}{|\boldsymbol{\Sigma}|^{1/2}}\, g_c \left((\boldsymbol{y} - \boldsymbol{\mu})^\top \boldsymbol{\Sigma}^{-1} (\boldsymbol{y} - \boldsymbol{\mu})\right)\, {\rm d}\boldsymbol{y}
	=
	\pi \int_{0}^{\infty} g_c(u)\,{\rm d}u.
	\end{align*}
}\noindent
The density generator $g_c$ in \eqref{eq:pdf:sym} leads to different bivariate symmetric distributions, which may contain an extra parameter (or extra parameter vector).

\bigskip
We propose a generalization of the classical Heckman-normal model \citep{heckman76} by considering independent 
errors terms following
a BSY distribution with regression structures for the sample selection bias ($0<\rho<1$) and dispersion ($\sigma>0$) parameters:
\begin{eqnarray}\label{errors:gen}
{Y^{*}_i \choose U^{*}_i} \sim \text{BSY} 
\Biggl(  \boldsymbol{\mu} = {\mu_{1i} \choose \mu_{2i}},
 \boldsymbol{\Sigma}= \begin{pmatrix}
  \sigma_{i}^2 & \rho_{i}\sigma_{i}  \\
 \rho_{i}\sigma_{i} & 1
 \end{pmatrix}, g_c
 \Biggr), \quad i=1,\ldots,n.
\end{eqnarray}
In the above equation $\mu_{1i}, \mu_{2i}, \sigma_i$ and $\rho_i$ are
are the mean, dispersion and correlation parameters, respectively, with the following regression structure
$g_1(\mu_{1i}) = \boldsymbol{x}_i^\top \boldsymbol{\beta}  $,
$g_2(\mu_{2i}) = \boldsymbol{w}_i^\top \boldsymbol{\gamma} $,
$h_1(\sigma_{i}) = \boldsymbol{z}_i^\top \boldsymbol{\lambda}$ and
$h_2(\rho_{i}) = \boldsymbol{v}_i^\top \boldsymbol{\kappa}$, where $\boldsymbol{\beta} = (\beta_1, \ldots, \beta_k)^\top \in \mathbb{R}^{k}$, $\boldsymbol{\gamma} = (\gamma_1, \ldots, \gamma_l)^\top \in \mathbb{R}^{l}$, $\boldsymbol{\lambda} = (\lambda_1, \ldots, \lambda_p)^\top \in \mathbb{R}^{p}$ and $\boldsymbol{\kappa} = (\kappa_1, \ldots, \kappa_q)^\top \in \mathbb{R}^{q}$ are vectors of regression coefficients, ${\boldsymbol{x_i}}=(x_{i1},\ldots,x_{ik})^{\top} $, ${\boldsymbol{w_i}}=(w_{i1},\ldots,w_{il})^{\top}$, ${\boldsymbol{z_i}}=(z_{i1},\ldots,z_{ip})^{\top}$ and ${\boldsymbol{v_i}}=(v_{i1},\ldots,v_{iq})^{\top}$ are the values of $k$, $l$, $p$ and $q$ covariates,
and $k+l+p+q < n$. The links $g_1(\cdot), g_2(\cdot), h_1(\cdot)$ and $h_2(\cdot)$ are strictly monotone and twice differentiable. The link functions $g_1: \mathbb{R}\rightarrow \mathbb{R}$, $g_2: \mathbb{R} \rightarrow \mathbb{R}$, $h_1: \mathbb{R}^+ \rightarrow \mathbb{R}$ and $h_2: [-1,1] \rightarrow \mathbb{R}$  must be strictly monotone, and at least twice differentiable, with $g_1^{-1}(\cdot)$, $g_2^{-1}(\cdot)$, $h_1^{-1}(\cdot)$, and $h_2^{-1}(\cdot)$ being the inverse functions of $g_1(\cdot)$, $g_2(\cdot)$, $h_1(\cdot)$, and $h_2(\cdot)$, respectively. For $g_1(\cdot)$ and $g_2(\cdot)$ the most common choice is the identity link, whereas for $h_1(\cdot)$ and $h_2(\cdot)$ the most common choices are logarithm and arctanh (inverse hyperbolic tangent) links, respectively.

We can agglutinate the information from $U_{i}^{*}$ in the following indicator function $U_i=\mathds{1}_{\{U_{i}^{*}>0\}}$.

Let $Y_{i}=Y_{i}^{*}U_i$ be the observed outcome,
for $i=1,\ldots,n$. Only $n_1$ out of $n$ observations $Y_{i}^{*}$ for which $U_{i}^{*} > 0$ are observed. This model is known as ``Type 2 tobit model'' in the econometrics literature. 
Notice that $U_i\sim {\rm Bernoulli}(\mathbb{P}(U_{i}^*> 0))$.
By using law of total probability,
for $\boldsymbol{\theta}=(\boldsymbol{\beta}^{\top},\boldsymbol{\gamma}^{\top},\sigma,\rho)^{\top}$, the random variable $Y_i$ has distribution function
	\begin{align*}
	F_{Y_i}(y_i;\boldsymbol{\theta})
	&=
	\mathbb{P}(Y_{i}\leq y_i|U_{i}^*>0) \mathbb{P}(U_{i}^*>0)
	+
	\mathbb{P}(Y_{i}\leq y_i|U_{i}^*>0) \mathbb{P}(U_{i}^*\leq 0)
	\\[0,2cm]
	&=
	\begin{cases}
	\mathbb{P}(Y_{i}^*\leq y_i|U_{i}^*>0) \mathbb{P}(U_{i}^*>0), &  \text{if}\, y_i<0,
	\\[0,2cm]
	\mathbb{P}(Y_{i}^*\leq y_i|U_{i}^*>0) \mathbb{P}(U_{i}^*>0)+ \mathbb{P}(U_{i}^*\leq 0),     &  \text{if}\, y_i\geq 0.
	\end{cases}
	\end{align*}
	The function $F_{Y_i}$ has only one jump, at $y_i=0$, and $\mathbb{P}(Y_{i}=0)=\mathbb{P}(U_{i}^*\leq 0)$. Therefore, $Y_i$ is a random variable that is neither discrete nor absolutely continuous, but a mixture of the two types. In other words,
	\begin{align*}
	F_{Y_i}(y_i;\boldsymbol{\theta})
	=
	\mathbb{P}(U_{i}^*\leq 0)
	F_{\rm d}(y_i)
	+
	\mathbb{P}(U_{i}^*> 0)
	F_{\rm ac}(y_i),
	\end{align*}
	where $F_{\rm d}(y_i)=\mathds{1}_{[0,+\infty)}(y_i)$ and $F_{\rm ac}(y_i)=\mathbb{P}(Y_{i}^*\leq y_i|U_{i}^*>0)$. Hence, the
PDF of $Y_i$ is given by
	\begin{align}\label{eq:pdfy}
	f_{Y_i}(y_i;\boldsymbol{\theta})
	&=
	\mathbb{P}(U_{i}^*\leq 0)
	\delta_0(y_i)
	+
	\mathbb{P}(U_{i}^*> 0)
	f_{Y_{i}^*|U_{i}^*>0 }(y_i;\boldsymbol{\theta})
	\nonumber
	\\[0,2cm]
	&=
	(\mathbb{P}(U_{i}^*\leq 0))^{1-u_i}
	(\mathbb{P}(U_{i}^*> 0))^{u_i}
	(f_{Y_{i}^*|U_{i}^*>0 }(y_i;\boldsymbol{\theta}))^{u_i}
	\\[0,2cm]
	&=
	\mathbb{P}(U_i = u_i) 
	(f_{Y_{i}^*|U_{i}^*>0 }(y_i;\boldsymbol{\theta}))^{u_i},
	\quad u_i=0,1,
	\nonumber
	\end{align}
	wherein $\mathbb{P}(U_i=0)=1-\mathbb{P}(U_i=1)=\mathbb{P}(U_{i}^*\leq 0)$ for $i=1,\ldots,n$,  and $\delta_0$ is the Dirac delta function.
That is, the density of $Y_i$ is composed of a discrete component described by the
probit model 
$\mathbb{P}(U_i = u_i) 
=
	(\mathbb{P}(U_{i}^*\leq 0))^{1-u_i}
	(\mathbb{P}(U_{i}^*> 0))^{u_i} 
$, for $u_i=0,1$, and a continuous part given by the conditional PDF $f_{Y_{i}^*|U_{i}^*>0 }(y_i;\boldsymbol{\theta})$.

\subsection{Finding the conditional density of ${Y}^*_i$, given that  ${U}^*_i > 0$}
In the context of sample selection models, the interest lies in finding the PDF of ${Y}^*_i|{U}^*_i > 0$ given that $({Y}^*_i, {U}^*_i)^\top$ follows the BSY distribution \eqref{errors:gen}; see Theorem \ref{Main-Theorem}.

Before stating and proving the main result (Theorem \ref{Main-Theorem}) of this section, throughout the paper we will adopt the following notations:
\begin{align}\label{int-G}
G_i(x)
=
	{\displaystyle	
	\int_{-x}^{\infty}	}
{f_{Z_{2i}\vert Z_{1i}} \Big(w_i\, \Big\vert\, {y_i-\mu_{1i}\over\sigma_i }\Big)
}
\,  {\rm d}w_i
\end{align}
and
\begin{align}\label{int-H}
H_i(x)
=
\int_{-x}^{\infty} 
f_{\rho_i Z_{1i}+\sqrt{1-\rho_i^2}\, Z_{2i}}(u_i) \,  {\rm d}u_i,
\end{align}
where
\begin{align}\label{def-Z}
Z_{1i}=RDV_{1i} \quad \text{and} \quad Z_{2i}=R\sqrt{1-D^2}V_{2i}
\end{align}
have the joint PDF $f_{Z_{1i},Z_{2i}}$ and $f_{X}$ denotes the PDF corresponding to a random variable $X$.
Here, the random variables $V_{1i}$, $V_{2i}$, $R$, and $D$ are mutually independent and $\mathbb{P}(V_{ki} = -1) = \mathbb{P}(V_{ki} = 1) = 1/2$, $k=1,2$. 
The random variable $D$ in \eqref{def-Z} is positive  and has PDF
\begin{align*}
f_D(d)={2\over \pi\sqrt{1-d^2}}, \quad d\in(0,1).
\end{align*}
The random variable $R$ in \eqref{def-Z} is positive and is called the generator of the random vector $(Y_i^*,U_i^*)^{\top}$. 
Moreover, $R$ has PDF given by
\begin{align*}
f_R(r)={2r g_c(r^2)\over \int_{0}^{\infty}
	g_c(u)
	\, {\rm d}{u}}, \quad r>0,
\end{align*}
where $g_c$ is the density generator in \eqref{eq:pdf:sym}.

\begin{proposition}\label{proposition-pdfs}
		Let us denote $S=RD$ and  $T=R\sqrt{1-D^2}$.
	\begin{enumerate}
		\item
The PDFs of $S$ and $T$ are given by
\begin{align}\label{dens-s-t}
f_{S}(v)=f_{T}(v)=
{\int_{v}^{\infty} {4g_c(w^2)\over \sqrt{1-{v^2\over w^2}}}\, {\rm d}w\over \pi \int_{0}^{\infty}g_c(u)\, {\rm d}{u}}\, ,
\quad v>0.
\end{align}		
		\item 
The PDFs of  $Z_{1i}$ and $Z_{2i}$ are 
\begin{align*}
	f_{Z_{1i}}(z)
	=
	f_{Z_{2i}}(z)
	=	
	{\int_{z}^{\infty} {2g_c(w^2)\over \sqrt{1-{z^2\over w^2}}}\, {\rm d}w\over \pi\int_{0}^{\infty}g_c(u)\, {\rm d}{u}}\,,
	\quad -\infty<z<\infty.
\end{align*}
	\end{enumerate}
\end{proposition}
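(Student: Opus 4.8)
The plan is to treat both parts as routine change-of-variables computations built on the mutual independence of $R$, $D$, $V_{1i}$, $V_{2i}$ together with the explicit densities $f_R$ and $f_D$ recalled above.

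For Part 1, I would first derive $f_S$. Since $R>0$ and $D\in(0,1)$ are independent, the density of the product $S=RD$ is
\begin{align*}
f_S(v)
&=
\int_0^\infty f_R(r)\, f_D\!\left(\frac{v}{r}\right)\frac{1}{r}\,{\rm d}r
\\
&=
\int_v^\infty \frac{2r\,g_c(r^2)}{\int_0^\infty g_c(u)\,{\rm d}u}\cdot\frac{2}{\pi\sqrt{1-v^2/r^2}}\cdot\frac{1}{r}\,{\rm d}r,
\quad v>0,
\end{align*}
where the lower limit becomes $v$ because $f_D(v/r)\neq 0$ forces $r>v$; cancelling $r$ and renaming $r$ as $w$ yields the asserted formula for $f_S$. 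To obtain $f_T=f_S$ I would show that $\sqrt{1-D^2}$ has the same law as $D$: the map $d\mapsto\sqrt{1-d^2}$ is an involution of $(0,1)$ whose Jacobian, a one-line computation, leaves the density $2/(\pi\sqrt{1-d^2})$ invariant. Since $\sqrt{1-D^2}$ is moreover independent of $R$, it follows that $T=R\sqrt{1-D^2}\stackrel{d}{=}RD=S$, hence $f_T=f_S$.

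For Part 2, I would use that $Z_{1i}=S\,V_{1i}$ where $V_{1i}$ is a Rademacher sign independent of $S>0$. Thus $Z_{1i}$ is the symmetrization of $S$ about the origin, and conditioning on the sign of $V_{1i}$ gives $f_{Z_{1i}}(z)=\tfrac12 f_S(|z|)$ for $z\neq 0$ (with $\mathbb{P}(Z_{1i}=0)=0$). Plugging in the formula from Part 1 produces the claimed density for $Z_{1i}$, with $|z|$ as the lower limit of integration; the identical argument with $T$ and $V_{2i}$ in place of $S$ and $V_{1i}$, using $f_T=f_S$, handles $Z_{2i}$.

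None of the steps is deep. The only places that warrant care are the support/Jacobian bookkeeping in the product formula (making sure the integral starts at $v$ rather than $0$) and, if one wants full rigor, invoking Tonelli's theorem to justify interchanging the marginalizing integral with the integral defining $f_R$ — legitimate because all integrands are nonnegative. The single mildly non-obvious observation is the distributional identity $\sqrt{1-D^2}\stackrel{d}{=}D$, after which the symmetry $f_S=f_T$ is immediate.
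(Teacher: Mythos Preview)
Your proposal is correct and follows essentially the same route as the paper: the product-density formula for $f_S$ (and $f_T$), then the law of total probability via the Rademacher sign for $f_{Z_{1i}}$ and $f_{Z_{2i}}$. Your argument is in fact more explicit than the paper's---in particular the distributional identity $\sqrt{1-D^2}\stackrel{d}{=}D$ is a clean way to get $f_T=f_S$ without repeating the integral computation, and your remark that the lower limit should read $|z|$ rather than $z$ for negative arguments is a valid clarification of the stated formula.
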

\begin{proof}
	Using the known formula for the PDF of product of two random variables $X$ and $Y$:
	\begin{align*}
	f_{XY}(u)=\int_{-\infty}^{\infty}{1\over\vert x\vert}\, f_{X,Y}\Big(x,{u\over x}\Big)\, {\rm d}x
	\end{align*}
	the proof of \eqref{dens-s-t} follows.
	
	The proof of second item follows by combining the law of total probability with  \eqref{dens-s-t}.
\end{proof}

\begin{proposition}\label{prop-int}
	The random vector $(Z_{1i},Z_{2i})^{\top}$ is jointly symmetric about $(0,0)$. That is,
	$f_{Z_{1i},Z_{2i}}(x,y)=f_{Z_{1i},Z_{2i}}(x,-y)=f_{Z_{1i},Z_{2i}}(-x,y)=f_{Z_{1i},Z_{2i}}(-x,-y)$.
\end{proposition}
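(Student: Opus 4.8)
The plan is to exploit the stochastic representation $Z_{1i}=RDV_{1i}$ and $Z_{2i}=R\sqrt{1-D^2}\,V_{2i}$ from \eqref{def-Z}, together with the mutual independence of $R,D,V_{1i},V_{2i}$ and the fact that each sign variable $V_{ki}$ satisfies $-V_{ki}\overset{d}{=}V_{ki}$.

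First I would record the elementary observation that, since $R,D,V_{1i},V_{2i}$ are mutually independent and $\mathbb{P}(V_{ki}=-1)=\mathbb{P}(V_{ki}=1)=1/2$, the vector $(R,D,V_{1i},V_{2i})$ has the same joint law as $(R,D,\epsilon_1V_{1i},\epsilon_2V_{2i})$ for every fixed pair of signs $\epsilon_1,\epsilon_2\in\{-1,+1\}$: only the marginal laws of $V_{1i}$ and $V_{2i}$ are affected, and each of those is invariant under multiplication by $-1$.

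Next I would write $(Z_{1i},Z_{2i})=\phi(R,D,V_{1i},V_{2i})$ with $\phi(r,d,v_1,v_2)=(rd\,v_1,\,r\sqrt{1-d^2}\,v_2)$, so that $\phi(r,d,\epsilon_1v_1,\epsilon_2v_2)=(\epsilon_1\,rd\,v_1,\,\epsilon_2\,r\sqrt{1-d^2}\,v_2)$. Applying the measurable map $\phi$ to both sides of the distributional identity of the previous step gives $(\epsilon_1Z_{1i},\epsilon_2Z_{2i})\overset{d}{=}(Z_{1i},Z_{2i})$ for all $\epsilon_1,\epsilon_2\in\{-1,+1\}$. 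Since $(Z_{1i},Z_{2i})$ possesses a joint PDF (obtained exactly as the marginals were in Proposition \ref{proposition-pdfs}, via the product-density formula), this equality of laws is equivalent to $f_{Z_{1i},Z_{2i}}(x,y)=f_{Z_{1i},Z_{2i}}(-x,y)=f_{Z_{1i},Z_{2i}}(x,-y)=f_{Z_{1i},Z_{2i}}(-x,-y)$ for almost every $(x,y)$, by choosing $(\epsilon_1,\epsilon_2)$ equal to $(-1,1)$, $(1,-1)$ and $(-1,-1)$ in turn.

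There is no genuinely hard step here; the only point deserving a word of care is the passage from equality in distribution of the random vectors to the pointwise identity among densities, which is routine once the joint density is known to exist. If a fully self-contained density computation is preferred, I would instead condition on $(R,D)=(r,d)$, note that the conditional law of $(Z_{1i},Z_{2i})$ then places mass $1/4$ at each of the four points $(\pm rd,\pm r\sqrt{1-d^2})$, and integrate this conditional law against the distribution of $(R,D)$; the resulting mixture is visibly invariant under a sign flip of either coordinate, which is exactly the asserted symmetry.
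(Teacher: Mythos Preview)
Your argument is correct, but it is not the route the paper takes. You exploit directly the sign-symmetry of the Rademacher variables $V_{1i},V_{2i}$ to conclude $(\epsilon_1 Z_{1i},\epsilon_2 Z_{2i})\overset{d}{=}(Z_{1i},Z_{2i})$, and then pass to densities. The paper instead first computes the joint density of $(S,T)=(RD,R\sqrt{1-D^2})$ by the Jacobian method, obtaining $f_{S,T}(s,t)=4g_c(s^2+t^2)/\bigl(\pi\int_0^\infty g_c(u)\,{\rm d}u\bigr)$ on $(0,\infty)^2$; then it conditions on the four values of $(V_{1i},V_{2i})$ to piece together $F_{Z_{1i},Z_{2i}}$ on each quadrant, differentiates, and arrives at the explicit closed form $f_{Z_{1i},Z_{2i}}(x,y)=g_c(x^2+y^2)/\bigl(\pi\int_0^\infty g_c(u)\,{\rm d}u\bigr)$, from which the symmetry is read off. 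Your approach is shorter and more conceptual for the stated proposition; the paper's computation is heavier but delivers, as a by-product, the explicit joint density \eqref{eq-joint-pdf}, which is actually needed later (in Proposition~\ref{prop-marg-sym} and in the corollary giving $f_{Y_i^*\mid U_i^*>0}$ directly from $g_c$). So your proof suffices for this proposition alone, but does not replace the density derivation the paper relies on downstream.
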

\begin{proof}
	Let $S=RD$ and  $T=R\sqrt{1-D^2}$.
Since  $R$ and $D$ are mutually independent, by using change of variables (Jacobian Method), the joint density of $S$ and $T$ is 
\begin{align}\label{pdf-S-T}
	f_{S,T}(s,t)
	=
	{4g_c(s^2+t^2)\over \pi  \int_{0}^{\infty} g_c(u)\, {\rm d}{u}}, \quad s,t>0.
\end{align}

Moreover, since $V_{1i}\stackrel{d}{=}V_{2i}\sim {\rm Bernoulli}(1/2)$ are mutually independent, by law of total probability we get
\begin{multline}\label{dec-cdf}
F_{Z_{1i},Z_{2i}}(x,y)
=
{1\over 4}\, 
\big\{
F_{S,T}(x,y)
\mathds{1}_{[0,\infty)\times [0,\infty)}(x,y)
+
\mathbb{P}(S\geqslant -x,T\geqslant -y)
\mathds{1}_{(-\infty,0)\times (-\infty,0)}(x,y)
\\[0,2cm]
+
\mathbb{P}(S\leqslant x,T\geqslant -y)
\mathds{1}_{(0,\infty)\times (-\infty,0)}(x,y)
+
\mathbb{P}(S\geqslant -x,T\leqslant y)
\mathds{1}_{(-\infty,0)\times (0,\infty)}(x,y)
\big\},
\end{multline}
where $F_{X,Y}(\cdot,\cdot)$ denotes the (joint) distribution function of $(X,Y)^{\top}$.
Using the following well-known identity:
\begin{align*}
\mathbb{P}(a_1<X\leqslant b_1, a_2<Y\leqslant b_2)
=
F_{X,Y}(b_1,b_2)-F_{X,Y}(b_1,a_2)-F_{X,Y}(a_1,b_2)+F_{X,Y}(a_1,a_2),
\end{align*}
we have
\begin{align*}
&\mathbb{P}(S\geqslant -x,T\geqslant -y)
=
F_{S,T}(\infty,\infty)-F_{S,T}(\infty,-y)-F_{S,T}(-x,\infty)+F_{S,T}(-x,-y), \ x<0,y<0;
\\[0,2cm]
&\mathbb{P}(S\leqslant x,T\geqslant -y)
=F_{S,T}(x,\infty)-F_{S,T}(x,-y)-F_{S,T}(0,\infty)+F_{S,T}(0,-y), \quad x>0,y<0;
\\[0,2cm]
&\mathbb{P}(S\geqslant -x,T\leqslant y)
=F_{S,T}(\infty,y)-F_{S,T}(\infty,0)-F_{S,T}(-x,y)+F_{S,T}(-x,0), \quad x<0,y>0.
\end{align*}
By replacing the last three identities in \eqref{dec-cdf} and then by differentiating $F_{Z_{1i},Z_{2i}}(x,y)$ with respect to $x$ and $y$, we obtain
\begin{align*}
	f_{Z_{1i},Z_{2i}}(x,y)
	&=
	 {1\over 4}\, \big\{
	f_{S,T}(x,y)\mathds{1}_{(0,\infty)\times (0,\infty)}(x,y)
	+
	f_{S,T}(-x,-y)\mathds{1}_{(-\infty,0)\times (-\infty,0)}(x,y)
	\\[0,2cm]
	&\quad+
	f_{S,T}(x,-y)\mathds{1}_{(0,\infty)\times (-\infty,0)}(x,y)
	+
	f_{S,T}(-x,y)\mathds{1}_{(-\infty,0)\times (0,\infty)}(x,y)
	\big\}
	\\[0,2cm]
	&={1\over 4}\,f_{S,T}(x,y)
	\mathds{1}_{\mathbb{R}\setminus\{0\}\times \mathbb{R}\setminus\{0\}}(x,y)
	=
	{g_c(x^2+y^2)\over \pi  \int_{0}^{\infty} g_c(u)\, {\rm d}{u}}
		\mathds{1}_{\mathbb{R}\setminus\{0\}\times \mathbb{R}\setminus\{0\}}(x,y),
\end{align*}
where in the last line we used the Equation \eqref{pdf-S-T}. Note that the function $f_{Z_{1i},Z_{2i}}$ above 
is not defined on the abscise or ordinate axes
(neither at the origin), but this is not relevant because these events have a null probability measure. Therefore, we can say that
\begin{align}\label{eq-joint-pdf}
	f_{Z_{1i},Z_{2i}}(x,y)
	=
	{g_c(x^2+y^2)\over \pi  \int_{0}^{\infty} g_c(u)\, {\rm d}{u}}, \quad -\infty<x,y<\infty.
\end{align}
From \eqref{eq-joint-pdf} it is clear that the vector $(Z_{1i},Z_{2i})^{\top}$ is jointly symmetric about $(0,0)$.
\end{proof}

\begin{proposition}\label{prop-marg-sym}
	\begin{enumerate}
		\item 
	The marginal PDFs of $Z_{1i}$ and $Z_{1i}$, denoted by $f_{1i}$ and $f_{2i}$, respectively, are given by
	\begin{align*}
	&f_{{1i}}(x)
	=
	{\int_{-\infty}^{\infty}g_c(x^2+y^2) \, {\rm d}y\over \pi \int_{0}^{\infty} g_c(u)\, {\rm d}{u}}
	=
	f_{Z_{1i}}(x), 
	\quad -\infty<x<\infty,
	\\[0,2cm]
	&f_{{2i}}(y)
	=
	{\int_{-\infty}^{\infty}g_c(x^2+y^2) \, {\rm d}x\over \pi \int_{0}^{\infty} g_c(u)\, {\rm d}{u}}
	=
	f_{Z_{2i}}(y),
	\quad -\infty<y<\infty,
	\end{align*}
	where $f_{Z_{1i}}$ and $f_{Z_{1i}}$ are given in Proposition \ref{proposition-pdfs}.
	
	\item 
	The random vector $(Z_{1i},Z_{2i})^{\top}$ is marginally symmetric about $(0, 0)$. That is, $f_{{1i}}(x)=f_{{1i}}(-x)$ and  $f_{{2i}}(y)=f_{{2i}}(-y)$.
		\end{enumerate}
\end{proposition}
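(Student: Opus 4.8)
The plan is to obtain the two marginal densities by integrating out one coordinate from the joint density found in Proposition \ref{prop-int}, and then to reconcile the resulting expression with the formula for $f_{Z_{1i}}$ from Proposition \ref{proposition-pdfs} via a single change of variables.

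First I would invoke \eqref{eq-joint-pdf}, which gives $f_{Z_{1i},Z_{2i}}(x,y) = g_c(x^2+y^2)\big/\big(\pi \int_{0}^{\infty} g_c(u)\, {\rm d}u\big)$ for all $(x,y)\in\mathbb{R}^2$ (the exclusion of the coordinate axes noted there is irrelevant, being a Lebesgue-null set, so it does not affect any integral). Integrating with respect to $y$ over $\mathbb{R}$ yields at once
\[
f_{1i}(x)=\int_{-\infty}^{\infty} f_{Z_{1i},Z_{2i}}(x,y)\, {\rm d}y = \frac{\int_{-\infty}^{\infty} g_c(x^2+y^2)\, {\rm d}y}{\pi \int_{0}^{\infty} g_c(u)\, {\rm d}u},
\]
which is the first claimed identity; integrating instead over $x$ gives the analogous formula for $f_{2i}$. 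To see that this coincides with $f_{Z_{1i}}$ as stated in Proposition \ref{proposition-pdfs}, I would use that $g_c(x^2+y^2)$ is even in $y$, so the numerator equals $2\int_{0}^{\infty} g_c(x^2+y^2)\, {\rm d}y$, and then substitute $w=\sqrt{x^2+y^2}$, i.e.\ $y=\sqrt{w^2-x^2}$, ${\rm d}y = w\, {\rm d}w/\sqrt{w^2-x^2}$, with $w$ ranging over $(|x|,\infty)$. This turns the numerator into $2\int_{|x|}^{\infty} g_c(w^2)\, w/\sqrt{w^2-x^2}\, {\rm d}w = \int_{|x|}^{\infty} 2g_c(w^2)/\sqrt{1-x^2/w^2}\, {\rm d}w$, which is exactly the numerator appearing in Proposition \ref{proposition-pdfs} with $z=x$. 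Hence $f_{1i}=f_{Z_{1i}}$, and symmetrically $f_{2i}=f_{Z_{2i}}$, establishing item~1.

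Item~2 is then immediate: the expression $\int_{-\infty}^{\infty} g_c(x^2+y^2)\, {\rm d}y$ depends on $x$ only through $x^2$, so $f_{1i}(x)=f_{1i}(-x)$, and likewise $f_{2i}(y)=f_{2i}(-y)$. There is no real obstacle in this argument; the only spot needing a little care is the Jacobian of the substitution $w=\sqrt{x^2+y^2}$ (and the reduction to the half-line $y>0$ using evenness), which is what reconciles the ``slab'' form of the marginal with the one-dimensional radial form recorded in Proposition \ref{proposition-pdfs}.
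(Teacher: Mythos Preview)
Your proposal is correct and follows essentially the same route as the paper: the paper's proof simply states that item~1 is ``immediate from the joint density of $(Z_{1i},Z_{2i})$, given in \eqref{eq-joint-pdf}, and by Proposition \ref{proposition-pdfs}'', and that item~2 ``follows from the first one''. You have supplied the details the paper omits, in particular the change of variables $w=\sqrt{x^2+y^2}$ that reconciles the marginal integral $\int_{-\infty}^\infty g_c(x^2+y^2)\,{\rm d}y$ with the radial form in Proposition~\ref{proposition-pdfs}; this is exactly the missing computation behind the paper's one-line proof.
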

\begin{proof}
The proof of first item is immediate from the joint density of $(Z_{1i},Z_{2i})$, given in \eqref{eq-joint-pdf}, and by Proposition \ref{proposition-pdfs}. The proof of the second item follows from the first one.
\end{proof}

\begin{proposition}\label{Prop-dual}
	The function $G_i$ in \eqref{int-G} satisfies the following identity:
\begin{align*}
G_i(x)
=
1-G_i(-x)
=
{\displaystyle	
	\int_{-\infty}^{x}	}
{f_{Z_{2i}\vert Z_{1i}} \Big(w_i\, \Big\vert\, {y_i-\mu_{1i}\over\sigma_i }\Big)
}
\,  {\rm d}w_i.
\end{align*}
That is, $G_i$ is the conditional CDF of $Z_{2i}$, given that $Z_{1i}=(y_i-\mu_{1i})/\sigma_i$. Consequently, $G_i$ is a distribution symmetric about $0$. 
\end{proposition}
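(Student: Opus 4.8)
The plan is to derive everything from the joint symmetry of $(Z_{1i},Z_{2i})^{\top}$ already established in Proposition \ref{prop-int}. Throughout, abbreviate $z=(y_i-\mu_{1i})/\sigma_i$, which is a fixed parameter, so that $G_i$ is regarded as a function of the single variable $x$. First I would record that the conditional density occurring in \eqref{int-G} is
\begin{align*}
f_{Z_{2i}\vert Z_{1i}}(w\mid z)=\frac{f_{Z_{1i},Z_{2i}}(z,w)}{f_{Z_{1i}}(z)},
\end{align*}
which is well defined because the marginal $f_{Z_{1i}}$ is strictly positive on $\mathbb{R}$ (immediate from the explicit formula in Proposition \ref{proposition-pdfs}, given that $g_c>0$). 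By Proposition \ref{prop-int} the numerator satisfies $f_{Z_{1i},Z_{2i}}(z,w)=f_{Z_{1i},Z_{2i}}(z,-w)$, so $f_{Z_{2i}\vert Z_{1i}}(w\mid z)=f_{Z_{2i}\vert Z_{1i}}(-w\mid z)$ for all $w$; i.e.\ the conditional law of $Z_{2i}$ given $Z_{1i}=z$ is symmetric about $0$.

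Next I would substitute $w\mapsto -w$ in the integral \eqref{int-G} and invoke this evenness:
\begin{align*}
G_i(x)=\int_{-x}^{\infty}f_{Z_{2i}\vert Z_{1i}}(w\mid z)\,{\rm d}w
=\int_{-\infty}^{x}f_{Z_{2i}\vert Z_{1i}}(-w\mid z)\,{\rm d}w
=\int_{-\infty}^{x}f_{Z_{2i}\vert Z_{1i}}(w\mid z)\,{\rm d}w,
\end{align*}
which is exactly the alternative expression claimed in the statement and exhibits $G_i$ as the conditional CDF of $Z_{2i}$ given $Z_{1i}=z$.

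Finally, for the relation $G_i(x)=1-G_i(-x)$ I would use that $f_{Z_{2i}\vert Z_{1i}}(\cdot\mid z)$ is a genuine probability density, hence integrates to $1$ over $\mathbb{R}$; combining this with the CDF representation just obtained,
\begin{align*}
1-G_i(-x)=1-\int_{-\infty}^{-x}f_{Z_{2i}\vert Z_{1i}}(w\mid z)\,{\rm d}w
=\int_{-x}^{\infty}f_{Z_{2i}\vert Z_{1i}}(w\mid z)\,{\rm d}w=G_i(x),
\end{align*}
the last step being the original definition \eqref{int-G}. Since the conditional distribution of $Z_{2i}$ given $Z_{1i}=z$ has a density (no atoms), the identity $G_i(x)=1-G_i(-x)$ is precisely the assertion that the distribution with CDF $G_i$ is symmetric about $0$.

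I do not anticipate a genuine obstacle here: the argument is a change of variables plus the already-proven joint symmetry. The only point deserving a word of care is the well-definedness of the conditioning, i.e.\ the strict positivity of the marginal $f_{Z_{1i}}$, which is read off from Proposition \ref{proposition-pdfs} under the standing assumption that the density generator $g_c$ is positive.
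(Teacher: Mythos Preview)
Your proposal is correct and follows essentially the same route as the paper, which simply states that the result follows immediately from Proposition~\ref{prop-int}; you have supplied the routine details (evenness of the conditional density from joint symmetry, change of variables, and the total-mass identity) that the paper leaves implicit. Your extra remark on the positivity of $f_{Z_{1i}}$ needed for the conditioning is a welcome clarification.
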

\begin{proof}
	The proof immediately follows by applying Proposition \ref{prop-int}.
\end{proof}

We now proceed to establish the main result of this section.
\begin{theorem}\label{Main-Theorem}
If $({Y}^*_i, {U}^*_i)^\top\sim {\rm BSY}(\boldsymbol{\mu},\boldsymbol{\Sigma},g_c)$ then the PDF of ${Y}^*_i|{U}^*_i > 0$ is given by
\begin{eqnarray}\label{label:sym:condi}
f_{{Y}_i^*|{U}_i^*>0} (y_i;\boldsymbol{\theta})
=
{1\over\sigma_i}\, f_{Z_{1i}} \biggl({y_i-\mu_{1i}\over\sigma_i }\biggr) \,
{G_i\Big({1\over \sqrt{1-\rho_i^2}}\,\mu_{2i}+{\rho_i\over \sqrt{1-\rho_i^2}}\,  ({y_i-\mu_{1i}\over\sigma_i })\Big)\over H_i(\mu_{2i})},
\end{eqnarray}
where $G_i$, $H_i$ and $Z_{1i}$ are given in  Proposition \ref{Prop-dual}, \eqref{int-H} and \eqref{def-Z}, respectively.

As a by-product of the proof we get that $\mathbb{P}(U_i^*>0)=H_i(\mu_{2i})$.
\end{theorem}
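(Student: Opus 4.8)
The plan is to use the elliptical (stochastic) representation of the BSY law in order to reduce the two relevant events, $\{Y_i^*\le y_i\}$ and $\{U_i^*>0\}$, to events about the spherical pair $(Z_{1i},Z_{2i})^\top$ introduced in \eqref{def-Z}, and then to obtain the conditional density by computing a joint sub-distribution function by conditioning on $Z_{1i}$, differentiating in $y_i$, and normalising.

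First I would record the representation $(Y_i^*,U_i^*)^\top \stackrel{d}{=} \bigl(\mu_{1i}+\sigma_i Z_{1i},\ \mu_{2i}+\rho_i Z_{1i}+\sqrt{1-\rho_i^2}\,Z_{2i}\bigr)^\top$. This holds because the lower-triangular matrix $\boldsymbol{A}$ with rows $(\sigma_i,0)$ and $(\rho_i,\sqrt{1-\rho_i^2})$ satisfies $\boldsymbol{A}\boldsymbol{A}^\top=\boldsymbol{\Sigma}$, and because, by Proposition \ref{prop-int} (equation \eqref{eq-joint-pdf}), $(Z_{1i},Z_{2i})^\top$ has the spherical density $g_c(x^2+y^2)/\bigl(\pi\int_0^\infty g_c(u)\,{\rm d}u\bigr)$; a linear change of variables in that density reproduces exactly the BSY density \eqref{eq:pdf:sym}, since $|\det\boldsymbol{A}|=|\boldsymbol{\Sigma}|^{1/2}$ and $Z_{g_c}=\pi\int_0^\infty g_c(u)\,{\rm d}u$. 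Under this representation the events of interest become
\begin{align*}
\{Y_i^*\le y_i\}=\Bigl\{Z_{1i}\le \tfrac{y_i-\mu_{1i}}{\sigma_i}\Bigr\},
\qquad
\{U_i^*>0\}=\Bigl\{\rho_i Z_{1i}+\sqrt{1-\rho_i^2}\,Z_{2i}>-\mu_{2i}\Bigr\}.
\end{align*}
The by-product then drops out immediately: $\mathbb{P}(U_i^*>0)=\int_{-\mu_{2i}}^{\infty} f_{\rho_i Z_{1i}+\sqrt{1-\rho_i^2}Z_{2i}}(u)\,{\rm d}u=H_i(\mu_{2i})$ by the definition \eqref{int-H}.

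Next, writing $t_i=(y_i-\mu_{1i})/\sigma_i$ and conditioning on $Z_{1i}$, I would express the joint sub-distribution function as
\begin{align*}
\mathbb{P}(Y_i^*\le y_i,\,U_i^*>0)
=\int_{-\infty}^{t_i} f_{Z_{1i}}(z)\,
\mathbb{P}\Bigl(Z_{2i}>\tfrac{-\mu_{2i}-\rho_i z}{\sqrt{1-\rho_i^2}}\,\Big|\,Z_{1i}=z\Bigr)\,{\rm d}z,
\end{align*}
differentiate with respect to $y_i$ (only the upper limit depends on $y_i$, which produces the factor $1/\sigma_i$ times the integrand evaluated at $z=t_i$), and divide by $\mathbb{P}(U_i^*>0)=H_i(\mu_{2i})$. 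Finally, because the conditioning value $t_i$ is precisely $(y_i-\mu_{1i})/\sigma_i$, Proposition \ref{Prop-dual} lets me rewrite the conditional upper-tail probability $\mathbb{P}\bigl(Z_{2i}>w\mid Z_{1i}=t_i\bigr)$ with $w=(-\mu_{2i}-\rho_i t_i)/\sqrt{1-\rho_i^2}$ as $G_i(-w)=G_i\bigl(\tfrac{1}{\sqrt{1-\rho_i^2}}\mu_{2i}+\tfrac{\rho_i}{\sqrt{1-\rho_i^2}}t_i\bigr)$, which is exactly \eqref{label:sym:condi}.

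The step I expect to require the most care is this conditioning/differentiation argument: justifying the interchange of differentiation and integration (via continuity of $f_{Z_{1i}}$ and of the conditional survival function $z\mapsto\mathbb{P}(Z_{2i}>(-\mu_{2i}-\rho_i z)/\sqrt{1-\rho_i^2}\mid Z_{1i}=z)$, together with a dominating bound), and tracking the sign flips carefully so that the inner tail probability lands on $G_i$ at the stated argument rather than on $1-G_i$. Checking the elliptical representation against \eqref{eq:pdf:sym} is routine, but it is worth spelling out so that the $(Z_{1i},Z_{2i})^\top$ used here is literally the pair described in Propositions \ref{proposition-pdfs}--\ref{Prop-dual}.
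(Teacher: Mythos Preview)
Your proposal is correct and follows essentially the same route as the paper. Both arguments rest on the stochastic representation $Y_i^*=\mu_{1i}+\sigma_i Z_{1i}$, $U_i^*=\mu_{2i}+\rho_i Z_{1i}+\sqrt{1-\rho_i^2}\,Z_{2i}$ and then compute the ``partial density'' $\partial_{y_i}\mathbb{P}(Y_i^*\le y_i,\,U_i^*>0)\big/\mathbb{P}(U_i^*>0)$; the paper does this by writing $f_{Y_i^*}(y_i)\int_0^\infty f_{U_i^*\mid Y_i^*}(u_i\mid y_i)\,{\rm d}u_i$ and performing a change of variable, whereas you write the sub-CDF as an integral in $z$ up to $t_i=(y_i-\mu_{1i})/\sigma_i$ and differentiate, but these are the same computation in a different order. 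One small remark: since your integrand does not depend on $y_i$ (only the upper limit $t_i$ does), no dominated-convergence or interchange argument is needed---the fundamental theorem of calculus suffices, so the ``most care'' step you flagged is in fact routine.
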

\begin{proof}
Since the random vector $(Y_i^*,U_i^*)^{\top}$ follows the BSY distribution \eqref{errors:gen}, this one admits the stochastic representation \citep{Adous2005}
\begin{align}\label{rep-est}
\begin{array}{llcc}
&Y_i^*=\sigma_i Z_{1i}+\mu_{1i},
\\[0,4cm]
&U_i^*=\rho_i Z_{1i}+\sqrt{1-\rho_i^2}\, Z_{2i}+\mu_{2i},
\end{array}
\end{align}
where $Z_{1i}$ and $Z_{2i}$ are as in \eqref{def-Z}.
If $Y_i^*=y_i$
then $Z_{1i}=(y_i-\mu_{1i})/\sigma_i$. So, the conditional distribution of $U_i^*$,  given that $Y_i^*=y_i$ is the
same as the distribution of
\begin{align*}
\rho_i\,  \biggl({y_i-\mu_{1i}\over\sigma_i }\biggr)+\sqrt{1-\rho_i^2} \, Z_{2i}+\mu_{2i}\ \Big\vert \ Y_i^*=y_i.
\end{align*}
Consequently, the PDF of $U_i^*$ given that $Y_i^*=y_i$ is given by
\begin{align}\label{id-1}
	f_{U_i^*\vert Y_i^*}(u_i\vert y_i)
	=
\dfrac{f_{Z_{1i},\, Z_{2i}} \Big({y_i-\mu_{1i}\over\sigma_i }, {1\over \sqrt{1-\rho_i^2}}\,(u_i- \mu_{2i}) -{\rho_i\over \sqrt{1-\rho_i^2}}\,  ({y_i-\mu_{1i}\over\sigma_i })\Big)
}
{\sqrt{1-\rho_i^2} \, f_{Z_{1i}} ({y_i-\mu_{1i}\over\sigma_i })}.
\end{align}
Further,
\begin{align}\label{id-2}
	f_{Y_i^*}(y_i)
=
{1\over\sigma_i}\, f_{Z_{1i}} \biggl({y_i-\mu_{1i}\over\sigma_i }\biggr).
\end{align}
By using the identity
\begin{equation*}
	f_{{Y}_i^*|{U}_i^*>0} (y_i;\boldsymbol{\theta})
	=
f_{Y_i^*}(y_i) \, 
\dfrac{\int_{0}^{\infty} f_{U_i^*\vert Y_i^*}(u_i\vert y_i)\,  {\rm d}u_i}{\mathbb{P}(U_i^*>0)}	
	=
f_{Y_i^*}(y_i) \, 
\dfrac{\int_{0}^{\infty} f_{U_i^*\vert Y_i^*}(u_i\vert y_i)\,  {\rm d}u_i}{\int_{0}^{\infty} f_{U_i^*}(u_i) \,  {\rm d}u_i},
\end{equation*}
and by employing identities \eqref{id-1} and \eqref{id-2}, we get
\begin{align*}
f_{{Y}_i^*|{U}_i^*>0} (y_i;\boldsymbol{\theta})
&=
{1\over\sigma_i}\, f_{Z_{1i}} \biggl({y_i-\mu_{1i}\over\sigma_i }\biggr) \,
\dfrac{
\displaystyle	
\int_{0}^{\infty}	
\frac{f_{Z_{1i},\, Z_{2i}} \Big({y_i-\mu_{1i}\over\sigma_i }, {1\over \sqrt{1-\rho_i^2}}\,(u_i- \mu_{2i}) -{\rho_i\over \sqrt{1-\rho_i^2}}\,  ({y_i-\mu_{1i}\over\sigma_i })\Big)
}
{\sqrt{1-\rho_i^2} \, f_{Z_{1i}} ({y_i-\mu_{1i}\over\sigma_i })}
 \,  {\rm d}u_i
}
{
	\displaystyle	
\int_{-\mu_{2i}}^{\infty} 
f_{\rho_i Z_{1i}+\sqrt{1-\rho_i^2}\, Z_{2i}}(u_i) \,  {\rm d}u_i
}
\\[0,2cm]
&=
{1\over\sigma_i}\, f_{Z_{1i}} \biggl({y_i-\mu_{1i}\over\sigma_i }\biggr) \,
\dfrac{
	{\displaystyle	
	\int_{-{1\over \sqrt{1-\rho_i^2}}\,\mu_{2i} -{\rho_i\over \sqrt{1-\rho_i^2}}\,  ({y_i-\mu_{1i}\over\sigma_i })}^{\infty}	}
	{f_{Z_{2i}\vert Z_{1i}} (w_i \vert {y_i-\mu_{1i}\over\sigma_i })
	}
	\,  {\rm d}w_i
}
{
	\displaystyle	
	\int_{-\mu_{2i}}^{\infty} 
	f_{\rho_i Z_{1i}+\sqrt{1-\rho_i^2}\, Z_{2i}}(u_i) \,  {\rm d}u_i
},
\end{align*}
where in the last line a change of variables was used.
Finally, by using the notations of $G_i$ and $H_i$ given in \eqref{int-G} and \eqref{int-H}, respectively, from the above identity and from Proposition \ref{Prop-dual} the proof follows.		
\end{proof}


\begin{corollary}[Gaussian density generator]
If $({Y}^*_i, {U}^*_i)^\top\sim {\rm BSY}(\boldsymbol{\mu},\boldsymbol{\Sigma},g_c)$, where {$g_c(x)=\exp(-x/2)$} is the density generator of the bivariate normal distribution, then the PDF of ${Y}^*_i|{U}^*_i > 0$ is given by
\begin{eqnarray*}
f_{{Y}_i^*|{U}_i^*>0} (y_i;\boldsymbol{\theta})
=
{1\over\sigma_i}\, \phi \biggl({y_i-\mu_{1i}\over\sigma_i }\biggr) \,
{\Phi\Big({1\over \sqrt{1-\rho_i^2}}\,\mu_{2i}+{\rho_i\over \sqrt{1-\rho_i^2}}\,  ({y_i-\mu_{1i}\over\sigma_i })\Big)\over \Phi(\mu_{2i})},
\end{eqnarray*}
wherein $\phi$ and $\Phi$ denote the PDF and CDF of the standard normal  distribution, respectively.
\end{corollary}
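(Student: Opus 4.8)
The plan is to derive the corollary as a direct specialization of Theorem~\ref{Main-Theorem}, so the only real work is to identify what the objects $f_{Z_{1i}}$, $G_i$, and $H_i$ become when $g_c(x)=\exp(-x/2)$. First I would substitute this generator into the joint density \eqref{eq-joint-pdf} from Proposition~\ref{prop-int}: since $\int_0^\infty e^{-u/2}\,{\rm d}u = 2$, we get $f_{Z_{1i},Z_{2i}}(x,y) = \exp(-(x^2+y^2)/2)/(2\pi)$, which is exactly the standard bivariate normal density with independent components. From this, the marginal $f_{Z_{1i}}$ in Proposition~\ref{proposition-pdfs} (or Proposition~\ref{prop-marg-sym}) becomes the standard normal PDF $\phi$, and this accounts for the first factor $\tfrac{1}{\sigma_i}\phi\big((y_i-\mu_{1i})/\sigma_i\big)$ in the stated formula.

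Next I would handle the two CDF-type quantities. For $H_i(\mu_{2i})$, note that $\rho_i Z_{1i} + \sqrt{1-\rho_i^2}\,Z_{2i}$ is a linear combination of independent standard normals with variance $\rho_i^2 + (1-\rho_i^2) = 1$, hence itself standard normal; therefore $H_i(\mu_{2i}) = \int_{-\mu_{2i}}^\infty \phi(u)\,{\rm d}u = \Phi(\mu_{2i})$ by symmetry of $\phi$. This simultaneously gives the by-product $\mathbb{P}(U_i^*>0)=\Phi(\mu_{2i})$. For $G_i$, by Proposition~\ref{Prop-dual} it is the conditional CDF of $Z_{2i}$ given $Z_{1i}=(y_i-\mu_{1i})/\sigma_i$; but in the Gaussian case $Z_{1i}$ and $Z_{2i}$ are independent standard normals, so this conditional distribution is just standard normal and $G_i(x)=\Phi(x)$. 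Plugging $G_i=\Phi$, $f_{Z_{1i}}=\phi$, and $H_i(\mu_{2i})=\Phi(\mu_{2i})$ into \eqref{label:sym:condi} yields the claimed expression.

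I do not anticipate a genuine obstacle here, since everything reduces to recognizing standard Gaussian identities; the only point requiring a line of care is confirming that the normalization constants cancel correctly — i.e., that $\pi\int_0^\infty g_c(u)\,{\rm d}u = 2\pi$ is precisely the constant that turns $g_c(x^2+y^2)$ into the bivariate normal density — and that the change of variables used in the proof of Theorem~\ref{Main-Theorem} preserves this structure. I would write the argument as roughly three short displayed computations (the joint density, then $H_i$, then $G_i$) followed by the substitution into \eqref{label:sym:condi}.

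\begin{proof}
Taking $g_c(x)=\exp(-x/2)$, we have $\int_0^\infty g_c(u)\,{\rm d}u = 2$, so by \eqref{eq-joint-pdf},
\begin{align*}
f_{Z_{1i},Z_{2i}}(x,y) = \frac{\exp\!\big(-(x^2+y^2)/2\big)}{2\pi}, \quad -\infty<x,y<\infty,
\end{align*}
which is the joint PDF of two independent standard normal random variables. By Proposition~\ref{proposition-pdfs} (or Proposition~\ref{prop-marg-sym}), it follows that $f_{Z_{1i}}=f_{Z_{2i}}=\phi$. Since $Z_{1i}$ and $Z_{2i}$ are independent, the linear combination $\rho_i Z_{1i}+\sqrt{1-\rho_i^2}\,Z_{2i}$ has a standard normal distribution, so from \eqref{int-H},
\begin{align*}
H_i(\mu_{2i}) = \int_{-\mu_{2i}}^\infty \phi(u)\,{\rm d}u = \Phi(\mu_{2i}),
\end{align*}
which also establishes $\mathbb{P}(U_i^*>0)=\Phi(\mu_{2i})$. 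Moreover, by Proposition~\ref{Prop-dual}, $G_i$ is the conditional CDF of $Z_{2i}$ given $Z_{1i}=(y_i-\mu_{1i})/\sigma_i$; by independence this conditional distribution is standard normal, hence $G_i=\Phi$. Substituting $f_{Z_{1i}}=\phi$, $G_i=\Phi$, and $H_i(\mu_{2i})=\Phi(\mu_{2i})$ into \eqref{label:sym:condi} gives
\begin{align*}
f_{{Y}_i^*|{U}_i^*>0} (y_i;\boldsymbol{\theta})
=
{1\over\sigma_i}\, \phi \biggl({y_i-\mu_{1i}\over\sigma_i }\biggr) \,
{\Phi\Big({1\over \sqrt{1-\rho_i^2}}\,\mu_{2i}+{\rho_i\over \sqrt{1-\rho_i^2}}\,  ({y_i-\mu_{1i}\over\sigma_i })\Big)\over \Phi(\mu_{2i})},
\end{align*}
as claimed.
\end{proof}
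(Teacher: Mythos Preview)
Your proof is correct and follows essentially the same route as the paper: both identify $f_{Z_{1i}}$, $G_i$, and $H_i$ in the Gaussian case and then invoke Theorem~\ref{Main-Theorem}. The only cosmetic difference is that you derive the independence of $Z_{1i}$ and $Z_{2i}$ by explicitly substituting $g_c$ into the joint density \eqref{eq-joint-pdf}, whereas the paper simply asserts the existence of independent standard normal $Z_{1i},Z_{2i}$ in the stochastic representation~\eqref{rep-est}; the remaining steps (computing $G_i=\Phi$ and $H_i=\Phi$) are identical.
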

\begin{proof}
If $({Y}^*_i, {U}^*_i)^\top$ follows the bivariate normal distribution then there exist independent standard normal random variables $Z_{1i}$ and $Z_{2i}$ such that a stochastic representation of type \eqref{rep-est} is satisfied. Therefore, $Z_{2i}\vert Z_{1i}=z$ and $\rho_i Z_{1i}+\sqrt{1-\rho_i^2}\, Z_{2i}$ are distributed according to the standard normal distribution.
Hence, $G_i$ and $H_i$, given in Proposition \ref{Prop-dual} and Item \eqref{int-H}, respectively, are written as:
\begin{align*}
G_i(x)
=
{\displaystyle	
	\int_{-\infty}^{x}	}
{f_{Z_{2i}\vert Z_{1i}} \Big(w_i\, \Big\vert\, {y_i-\mu_{1i}\over\sigma_i }\Big)
}
\,  {\rm d}w_i
=
\mathbb{P}\Big(Z_{2i}\leqslant -x \,\Big\vert \, Z_{1i} ={y_i-\mu_{1i}\over\sigma_i }\Big)
=
\Phi(x)
\end{align*}
and
\begin{align*}
H_i(x)
=
\int_{-x}^{\infty} 
f_{\rho_i Z_{1i}+\sqrt{1-\rho_i^2}\, Z_{2i}}(u_i) \,  {\rm d}u_i
=
\mathbb{P}\big(\rho_i Z_{1i}+\sqrt{1-\rho_i^2}\, Z_{2i}>-x\big)
=
\Phi(x).
\end{align*}
Applying Theorem \ref{Main-Theorem} the proof concludes.
\end{proof}


{
\begin{remark}\label{remark-t-student}
	It is well-known that, if  $(X_1,X_2)^\top$ is distributed from a bivariate Student-$t$ distribution, notation  $(X_1,X_2)^\top\sim t_\nu(\boldsymbol{\mu},\boldsymbol{\Sigma})$, where $\boldsymbol{\mu}=(\mu_1,\mu_2)\in\mathbb{R}^2$ and $\boldsymbol{\Sigma}$ is as in \eqref{covariance matrix}, then both the marginal and the conditional distributions of $X_2$ given $X_1$ are also univariate Student's $t$ distributions:
	$X_1\sim t_\nu(\mu_1,\sigma_1)$
	and
	\begin{align*}
	X_2\vert X_1=x_1
	\sim
	t_{\nu+1}\biggl(\mu_2+\rho\sigma_2\Big({x_1-\mu_1\over\sigma_1}\Big),{\nu+({x_1-\mu_1\over\sigma_1})^2\over\nu+1}\, \sigma_2^2(1-\rho^2)\biggr).
	\end{align*}
	The above statement is equivalent to
	\begin{align*}
	\sqrt{\nu+1\over (\nu+r^2)(1-\rho^2)}\, 
	\Big({X_2-\mu_2\over \sigma_2}-\rho r\Big)\,\bigg \vert {X_1-\mu_1\over\sigma_1}=r
	\sim
	t_{\nu+1}(0,1)\equiv t_{\nu+1}.
	\end{align*}
\end{remark}
}

\begin{corollary}[Student-$t$ density generator]\label{Student-gen}
If $({Y}^*_i, {U}^*_i)^\top\sim {\rm BSY}(\boldsymbol{\mu},\boldsymbol{\Sigma},g_c)$, where
{$g_c(x)=(1+x/\nu)^{-(\nu+2)/2}$} is the density generator of the bivariate Student-$t$ distribution with $\nu$  degrees of freedom, then the PDF of ${Y}^*_i|{U}^*_i > 0$ is given by
\begin{eqnarray*}
	f_{{Y}_i^*|{U}_i^*>0} (y_i;\boldsymbol{\theta})
	=
	{1\over\sigma_i}\, f_\nu \biggl({y_i-\mu_{1i}\over\sigma_i }\biggr) \,
	{
	F_{\nu+1}\Big(
	\sqrt{{ \nu+1\over \nu+({y_i-\mu_{1i}\over\sigma_i })^2}}\
	\big[{1\over \sqrt{1-\rho_i^2}}\,\mu_{2i}+{\rho_i\over \sqrt{1-\rho_i^2}}\,  ({y_i-\mu_{1i}\over\sigma_i })\big]
	\Big)
	\over 
	F_\nu(\mu_{2i})
},
\end{eqnarray*}
wherein $f_\nu$ and $F_\nu$ denote the PDF and CDF of a classic Student-$t$  distribution with $\nu$  degrees of freedom, respectively.
\end{corollary}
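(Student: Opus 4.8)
The plan is to specialize Theorem~\ref{Main-Theorem} to the Student-$t$ generator $g_c(x)=(1+x/\nu)^{-(\nu+2)/2}$ and to evaluate the three ingredients $f_{Z_{1i}}$, $G_i$ and $H_i$ explicitly. The first step is to observe that for this choice of $g_c$ the joint density \eqref{eq-joint-pdf} of $(Z_{1i},Z_{2i})^{\top}$ is proportional to $(1+(x^2+y^2)/\nu)^{-(\nu+2)/2}$, i.e.\ $(Z_{1i},Z_{2i})^{\top}\sim t_\nu(\boldsymbol{0},\boldsymbol{I}_2)$, a bivariate Student-$t$ with zero correlation but with \emph{dependent} components (in contrast to the Gaussian case, where the corresponding $Z_{1i},Z_{2i}$ are independent). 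In particular, marginally $Z_{1i}\sim t_\nu(0,1)$, so $f_{Z_{1i}}=f_\nu$, which already produces the leading factor $\sigma_i^{-1}f_\nu\bigl((y_i-\mu_{1i})/\sigma_i\bigr)$ in the stated formula.

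Next I would compute $G_i$. By Proposition~\ref{Prop-dual}, $G_i(x)$ is the conditional CDF of $Z_{2i}$ given $Z_{1i}=r$, with $r=(y_i-\mu_{1i})/\sigma_i$. Applying Remark~\ref{remark-t-student} to the pair $(Z_{1i},Z_{2i})^{\top}\sim t_\nu(\boldsymbol{0},\boldsymbol{I}_2)$ (i.e.\ with $\mu_1=\mu_2=0$, $\sigma_1=\sigma_2=1$, $\rho=0$) gives $Z_{2i}\mid Z_{1i}=r\sim t_{\nu+1}\bigl(0,(\nu+r^2)/(\nu+1)\bigr)$, equivalently $\sqrt{(\nu+1)/(\nu+r^2)}\,Z_{2i}\mid Z_{1i}=r\sim t_{\nu+1}$. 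Hence $G_i(x)=F_{\nu+1}\bigl(\sqrt{(\nu+1)/(\nu+r^2)}\;x\bigr)$, and evaluating at $x=\mu_{2i}/\sqrt{1-\rho_i^2}+\rho_i r/\sqrt{1-\rho_i^2}$ produces exactly the numerator of the claimed density.

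For the denominator $H_i(\mu_{2i})$, recall from \eqref{int-H} that $H_i(\mu_{2i})=\mathbb{P}\bigl(\rho_i Z_{1i}+\sqrt{1-\rho_i^2}\,Z_{2i}>-\mu_{2i}\bigr)$. Since $(Z_{1i},Z_{2i})^{\top}\sim t_\nu(\boldsymbol{0},\boldsymbol{I}_2)$, any unit-norm linear combination is again $t_\nu(0,1)$; with coefficient vector $(\rho_i,\sqrt{1-\rho_i^2})$ this yields $\rho_i Z_{1i}+\sqrt{1-\rho_i^2}\,Z_{2i}\sim t_\nu$ (this also follows directly from the stochastic representation \eqref{rep-est}, since then $U_i^*-\mu_{2i}$ is the marginal, which is $t_\nu$). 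By symmetry $H_i(\mu_{2i})=F_\nu(\mu_{2i})$, which in particular gives the by-product $\mathbb{P}(U_i^*>0)=F_\nu(\mu_{2i})$. Substituting $f_{Z_{1i}}=f_\nu$, the expression for $G_i$, and $H_i(\mu_{2i})=F_\nu(\mu_{2i})$ into \eqref{label:sym:condi} finishes the proof.

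The reduction of $g_c$ to the standard bivariate $t_\nu$ density and the two ``unit-norm linear combination'' facts are routine properties of the multivariate $t$; the only genuinely delicate point is the bookkeeping of the scale in the conditional $t_{\nu+1}$ law, namely that a $t_{\nu+1}(0,s^2)$ variable with $s^2=(\nu+r^2)/(\nu+1)$ has CDF $x\mapsto F_{\nu+1}(x/s)$, so that the factor $\sqrt{(\nu+1)/(\nu+r^2)}$ ends up \emph{inside} the argument of $F_{\nu+1}$ with $r=(y_i-\mu_{1i})/\sigma_i$, precisely as written in the statement.
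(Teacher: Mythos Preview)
Your proof is correct and follows essentially the same route as the paper: specialize Theorem~\ref{Main-Theorem} by identifying $(Z_{1i},Z_{2i})^\top$ as bivariate $t_\nu(\boldsymbol 0,\boldsymbol I_2)$, then use Remark~\ref{remark-t-student} for the conditional law to obtain $G_i$, and the fact that a unit-norm linear combination is again $t_\nu$ to obtain $H_i$. The only cosmetic difference is that the paper reaches the same conclusions via the explicit normal--scale-mixture representation $Z_{ki}=\sqrt{\nu}\,\widetilde Z_{ki}/\sqrt{Q}$ (with $\widetilde Z_{1i},\widetilde Z_{2i}\stackrel{\text{iid}}{\sim}N(0,1)$ and $Q\sim\chi^2_\nu$ independent), applying Remark~\ref{remark-t-student} to $(Y_i^*,U_i^*)$ and then back-substituting, whereas you read off the bivariate-$t$ structure directly from \eqref{eq-joint-pdf} and apply the remark to $(Z_{1i},Z_{2i})$ itself; the latter is slightly more direct but the content is identical.
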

\begin{proof}
It is well-known that the vector $({Y}^*_i, {U}^*_i)^\top$ following  a bivariate Student-$t$ distribution has the stochastic representation; see Subsection 9.2.6, p. 354 of \cite{Balai:09}:
\begin{align}\label{rep-t-student}
\begin{array}{llcc}
&Y_i^*=\sigma_i Z_{1i}+\mu_{1i},
\\[0,4cm]
&U_i^*=\rho_i Z_{1i}+\sqrt{1-\rho_i^2}\, Z_{2i}+\mu_{2i},
\end{array}
\end{align} 
where $Z_{1i}=\sqrt{\nu}\widetilde{Z}_{1i}/\sqrt{Q}$ and $Z_{2i}=\sqrt{\nu}\widetilde{Z}_{2i}/\sqrt{Q}$ are Student-$t$ random variables, wherein $Q\sim \chi^2_\nu$ (chi-square with $\nu$ degrees of freedom) is independent of $\widetilde{Z}_{1i}$ and $\rho_i \widetilde{Z}_{1i}+\sqrt{1-\rho_i^2}\, \widetilde{Z}_{2i}$, and  $\widetilde{Z}_{1i}\stackrel{d}{=}\widetilde{Z}_{2i}\sim N(0,1)$ are independent.

When $({Y}^*_i-\mu_{1i})/\sigma_i= r$, { from Remark \ref{remark-t-student} we have
\begin{align*}
\sqrt{\nu+1\over (\nu+r^2)(1-\rho_i^2)}\, 
({U}^*_i-\mu_{2i}-\rho_i r)\,\bigg \vert {Y_i^*-\mu_{1i}\over\sigma_i}=r
\sim
t_{\nu+1}.
\end{align*}
Hence,}
\begin{align*}
F_{\nu+1}(w)
=
\mathbb{P}
\left(
\sqrt{{\nu+1\over (\nu+r^2)(1-\rho_i^2)}}\, 
({U}^*_i-\mu_{2i}-\rho_i r)
\leqslant
w
\,
\Bigg\vert 
\,
{{Y}^*_i-\mu_{1i}\over\sigma_i}= r
\right), 
\quad -\infty<w<\infty.
\end{align*}
By using the representation \eqref{rep-t-student},
a simple algebraic manipulation shows that the right-hand of the above probability is
\begin{align*}
=
\mathbb{P}
\left(
Z_{2i}\leqslant {w\over \sqrt{{\nu+1\over \nu+r^2}}}
\,
\Bigg\vert 
\,
Z_{1i}=r
\right).
\end{align*}
Setting $w=-\sqrt{{(\nu+1)/(\nu+r^2)}} x$ and $r={(y_i-\mu_{1i})/\sigma_i }$ we obtain
\begin{align*}
\mathbb{P}\Big(Z_{2i}\leqslant -x \,\Big\vert \, Z_{1i} =
{y_i-\mu_{1i}\over\sigma_i }\Big)
=
F_{\nu+1}\left(\sqrt{{\nu+1\over \nu+({y_i-\mu_{1i}\over\sigma_i })^2}}\ x\right).
\end{align*}

Therefore, the function $G_i$, given in Proposition \ref{Prop-dual}, is
\begin{align*}
G_i(x)
=
{\displaystyle	
	\int_{-\infty}^{x}	}
{f_{Z_{2i}\vert Z_{1i}} \Big(w_i\, \Big\vert\, {y_i-\mu_{1i}\over\sigma_i }\Big)
}
\,  {\rm d}w_i
&=
\mathbb{P}\Big(Z_{2i}\leqslant -x \,\Big\vert \, Z_{1i} =
{y_i-\mu_{1i}\over\sigma_i }\Big)
\\[0,2cm]
&=
F_{\nu+1}\left(\sqrt{{\nu+1\over \nu+({y_i-\mu_{1i}\over\sigma_i })^2}}\ x\right).
\end{align*}

On the other hand, note that $\rho_i Z_{1i}+\sqrt{1-\rho_i^2}\, Z_{2i}=\sqrt{\nu}(\rho_i \widetilde{Z}_{1i}+\sqrt{1-\rho_i^2}\, \widetilde{Z}_{2i})/\sqrt{Q}$ is distributed according to the a Student-$t$ distribution with $\nu$ degrees of freedom because $\rho_i \widetilde{Z}_{1i}+\sqrt{1-\rho_i^2}\, \widetilde{Z}_{2i}\sim N(0,1)$.
Then
\begin{align*}
H_i(x)
=
\int_{-x}^{\infty} 
f_{\rho_i Z_{1i}+\sqrt{1-\rho_i^2}\, Z_{2i}}(u_i) \,  {\rm d}u_i
=
\mathbb{P}\big(\rho_i Z_{1i}+\sqrt{1-\rho_i^2}\, Z_{2i}>-x\big)
=
F_{\nu}(x).
\end{align*}
Applying Theorem \ref{Main-Theorem} the proof follows.
\end{proof}


Given the density generator $g_c$ we can directly determine the PDF of ${Y}^*_i|{U}^*_i > 0$ as follows.
\begin{corollary}
	If $({Y}^*_i, {U}^*_i)^\top\sim {\rm BSY}(\boldsymbol{\mu},\boldsymbol{\Sigma},g_c)$ then the PDF of ${Y}^*_i|{U}^*_i > 0$ is given by
	\begin{eqnarray*}
		f_{{Y}_i^*|{U}_i^*>0} (y_i;\boldsymbol{\theta})
		=
		{\rho_i \sqrt{1-\rho_i^2}\over\sigma_i}
		\,
		{	
			{
				{
					\int_{-\infty}^{z_i} g_c(({y_i-\mu_{1i}\over\sigma_i })^2+w_i^2)\,  {\rm d}w_i
				}
			}
			\over 
			\int_{-\mu_{2i}}^{\infty} 
			\big\{
			\int_{-\infty}^{\infty} 
			{g_c\big(({x\over \rho_i})^2+\big({u_i-x\over \sqrt{1-\rho_i^2}}\big)^2\big)}
			\,{\rm d}x
			\big\}
			\,  {\rm d}u_i
		},
	\end{eqnarray*}
	where $z_i=\mu_{2i}/\sqrt{1-\rho_i^2}+
	{\rho_i}\,  ({y_i-\mu_{1i}\over\sigma_i })/\sqrt{1-\rho_i^2}$.
\end{corollary}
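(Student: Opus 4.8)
The plan is to specialize Theorem~\ref{Main-Theorem}: one writes each of the three ingredients $f_{Z_{1i}}$, $G_i$ and $H_i$ appearing in \eqref{label:sym:condi} explicitly in terms of the density generator $g_c$, and then watches the normalization constants cancel.

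First I would recall from Proposition~\ref{prop-int} (see \eqref{eq-joint-pdf}) that the joint density of $(Z_{1i},Z_{2i})^{\top}$ equals $g_c(x^2+y^2)\big/\big(\pi\int_0^{\infty} g_c(u)\,{\rm d}u\big)$, and from Proposition~\ref{prop-marg-sym} the corresponding expression for the marginal $f_{Z_{1i}}$. Writing $G_i$ as the conditional CDF in Proposition~\ref{Prop-dual} and using $f_{Z_{2i}\mid Z_{1i}}(w\mid r)=f_{Z_{1i},Z_{2i}}(r,w)/f_{Z_{1i}}(r)$ with $r=(y_i-\mu_{1i})/\sigma_i$, the factor $f_{Z_{1i}}(r)$ that multiplies $G_i$ in \eqref{label:sym:condi} cancels the denominator of the conditional density, leaving
\[
\frac{1}{\sigma_i}\,f_{Z_{1i}}(r)\,G_i(z_i)
=
\frac{1}{\sigma_i\,\pi\int_0^{\infty} g_c(u)\,{\rm d}u}\int_{-\infty}^{z_i} g_c\!\left(r^2+w_i^2\right){\rm d}w_i,
\]
with $z_i$ exactly as in the statement.

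Next I would compute the density of $W_i:=\rho_iZ_{1i}+\sqrt{1-\rho_i^2}\,Z_{2i}$ that enters $H_i(\mu_{2i})$ through \eqref{int-H}. Conditioning on $Z_{1i}=x$ and performing, at fixed $x$, the change of variable $w=\rho_ix+\sqrt{1-\rho_i^2}\,z_2$ gives
\[
f_{W_i}(w)=\frac{1}{\sqrt{1-\rho_i^2}}\int_{\mathbb{R}} f_{Z_{1i},Z_{2i}}\!\left(x,\frac{w-\rho_ix}{\sqrt{1-\rho_i^2}}\right){\rm d}x;
\]
since $0<\rho_i<1$, the further substitution replacing $x$ by $x/\rho_i$ is admissible and rewrites the argument of $g_c$ as $(x/\rho_i)^2+\big((w-x)/\sqrt{1-\rho_i^2}\big)^2$, producing an extra factor $1/\rho_i$. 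Integrating over $w\in(-\mu_{2i},\infty)$ then identifies $H_i(\mu_{2i})$ with the double integral in the denominator of the claimed formula, divided by $\rho_i\sqrt{1-\rho_i^2}\,\pi\int_0^{\infty} g_c(u)\,{\rm d}u$.

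Finally, forming the quotient $\tfrac{1}{\sigma_i}f_{Z_{1i}}(r)G_i(z_i)\big/H_i(\mu_{2i})$ prescribed by Theorem~\ref{Main-Theorem}, the common normalization $\pi\int_0^{\infty} g_c(u)\,{\rm d}u$ cancels, the factor $\rho_i\sqrt{1-\rho_i^2}$ passes from the denominator of $H_i(\mu_{2i})$ into the numerator, and the asserted expression is read off. I expect the only delicate point to be the change of variables in the $H_i$ computation: one must keep track of the Jacobian factors so that precisely the combination $\rho_i\sqrt{1-\rho_i^2}$ (and not, say, $\sqrt{1-\rho_i^2}$ alone) survives in the numerator. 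Everything else is substitution and cancellation.
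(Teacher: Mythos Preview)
Your proposal is correct and follows essentially the same route as the paper: express $f_{Z_{1i}}$, $G_i$ and $H_i$ in terms of $g_c$ via the joint density \eqref{eq-joint-pdf}, then substitute into Theorem~\ref{Main-Theorem} and cancel the common factor $\pi\int_0^\infty g_c(u)\,{\rm d}u$. The only cosmetic difference is in the $H_i$ step: the paper applies the convolution identity $f_{X+Y}(z)=\int f_{X,Y}(x,z-x)\,{\rm d}x$ directly to $(X,Y)=(\rho_iZ_{1i},\sqrt{1-\rho_i^2}\,Z_{2i})$, which delivers the integrand $g_c((x/\rho_i)^2+((u_i-x)/\sqrt{1-\rho_i^2})^2)$ and the prefactor $1/(\rho_i\sqrt{1-\rho_i^2})$ in one stroke, whereas you reach the same expression in two steps (first marginalize, then substitute $x\mapsto x/\rho_i$); the Jacobian bookkeeping you flagged is handled correctly either way.
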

\begin{proof}
	By using the known formula for the PDF of sum of two random variables $X$ and $Y$:
	\begin{align*}
	f_{X+Y}(z)=\int_{-\infty}^{\infty} f_{X,Y}(x,z-x)\,{\rm d}x,
	\end{align*}
	we have that $H_i$, defined in \eqref{int-H}, can be written as
	\begin{align*}
	H_i(x)
	=
	{1\over  \rho_i \sqrt{1-\rho_i^2}}
	{
		\int_{-x}^{\infty} 
		\big\{
		\int_{-\infty}^{\infty} 
		{g_c\big(({\zeta\over \rho_i})^2+\big({u_i-\zeta\over \sqrt{1-\rho_i^2}}\big)^2\big)}
		\,{\rm d}\zeta
		\big\}
		\,  {\rm d}u_i
		\over
		\pi  \int_{0}^{\infty} g_c(u)\, {\rm d}{u}
	}.
	\end{align*}

	Proposition \ref{prop-marg-sym} provides
	\begin{align}\label{density-Z-1}
	f_{Z_{1i}} \Big({y_i-\mu_{1i}\over\sigma_i }\Big)
	=
	{\int_{-\infty}^{\infty}g_c(({y_i-\mu_{1i}\over\sigma_i })^2+y^2) \, {\rm d}y\over \pi \int_{0}^{\infty} g_c(u)\, {\rm d}{u}}.
	\end{align}
	
	By combining Proposition \ref{Prop-dual}, Equations  \eqref{eq-joint-pdf} and \eqref{density-Z-1}, we have
	\begin{align*}
	G_i(x)
	=
	{
		\int_{-\infty}^{x}	
		{f_{Z_{1i}, Z_{2i}}({y_i-\mu_{1i}\over\sigma_i }, w_i)
		}
		\,  {\rm d}w_i	
		\over
		f_{Z_{1i}}({y_i-\mu_{1i}\over\sigma_i })
	}
	=
	{
		\int_{-\infty}^{x}	
		g_c(({y_i-\mu_{1i}\over\sigma_i })^2+ w_i^2)
		\,  {\rm d}w_i	
		\over 
		{\int_{-\infty}^{\infty}g_c(({y_i-\mu_{1i}\over\sigma_i })^2+y^2) \, {\rm d}y}
	}.
	\end{align*}
	
	Substituting the above formulas of $H_i$, $f_{Z_{1i}}$ and $G_i$ into Theorem \ref{Main-Theorem}, we complete the proof.
\end{proof}

\subsection{Maximum likelihood estimation}
\label{Maximum likelihood estimation}
By combining Equation \eqref{eq:pdfy} with Theorem \ref{Main-Theorem}, the following formula for the PDF of $Y_i$ is valid:
\begin{align*}
f_{Y_i}(y_i;\boldsymbol{\theta})
&=
(1-H_i(\mu_{2i}))^{1-u_i}
(H_i(\mu_{2i}))^{u_i}
\Biggl[
{1\over\sigma_i}\, f_{Z_{1i}} \biggl({y_i-\mu_{1i}\over\sigma_i }\biggr) \,
{G_i\big(\tau_i+\alpha_i  ({y_i-\mu_{1i}\over\sigma_i })\big)\over H_i(\mu_{2i})}
\Biggr]^{u_i},
\end{align*}
where $\alpha_i=\rho_i/\sqrt{1-\rho_i^2}$,  $\tau_i=\mu_{2i}/\sqrt{1-\rho_i^2}$,
$u_i = 1$ if $u_i^{*}>0$ and $u_i = 0$ otherwise, $
g_1(\mu_{1i}) = \boldsymbol{x}_i^\top \boldsymbol{\beta}  $,
$g_2(\mu_{2i}) = \boldsymbol{w}_i^\top \boldsymbol{\gamma} $,
$h_1(\sigma_{i}) = \boldsymbol{z}_i^\top \boldsymbol{\lambda}$ and
$h_2(\rho_{i}) = \boldsymbol{v}_i^\top \boldsymbol{\kappa}$.

The log-likelihood of the symmetric generalized Heckman model for $\boldsymbol{\theta} = (\boldsymbol{\beta}^\top, \boldsymbol{\gamma}^\top, \boldsymbol{\lambda}^\top, \boldsymbol{\kappa}^\top)^{\top}$ is given by
\begin{align}\label{eq:loglik_symmetric}
\ell(\boldsymbol{\theta})
&=
\sum_{i=1}^{n} \log f_{Y_i}(y_i;\boldsymbol{\theta})
\nonumber
\\[0,2cm]
&=
\sum_{i=1}^{n}
u_i
\biggl[
-\log(\sigma_i)
+
\log
f_{Z_{1i}}\left(\frac{y_i - \mu_{1i}}{\sigma_i}\right)
+
\log
{G_{i}\left(\tau_i+\alpha_i 
\left(\frac{y_i - \mu_{1i}}{\sigma_i}\right)\right)}
\biggr]
\nonumber
\\[0,2cm]
&
+
\sum_{i=1}^{n} (1 - u_i)\log(1-H_i(\mu_{2i})).
\end{align}
To obtain the ML estimate of $\boldsymbol{\theta}$, we maximize the log-likelihood function \eqref{eq:loglik_symmetric} by equating the score vector $\dot{\ell}(\boldsymbol{\theta})$ to zero, providing the likelihood equations. They are solved by means of an iterative procedure for non-linear optimization, such as the Broyden-Fletcher-Goldfarb-Shanno (BFGS) quasi-Newton method.

The likelihood equations are given by
{\small
\begin{align*}
	0={\partial \ell(\boldsymbol{\theta}) \over \partial\beta_j}
	&=
	\sum_{i=1}^{n}
	{u_i\over\sigma_i}
	\Biggl[-
{
	f'_{Z_{1i}}(\frac{y_i - \mu_{1i}}{\sigma_i})
	\over
	f_{Z_{1i}}(\frac{y_i - \mu_{1i}}{\sigma_i})
}
-
\frac{\alpha_i 
f_{Z_{2i}\vert Z_{1i}} \big(\alpha_i\, (\frac{y_i - \mu_{1i}}{\sigma_i}) + \tau_i\, \big\vert\, {y_i-\mu_{1i}\over\sigma_i }\big)
}
{G_i \big(\alpha_i\, (\frac{y_i - \mu_{1i}}{\sigma_i}) + \tau_i\big)}
\Biggr] {\partial \mu_{1i} \over \partial\beta_j};
	\\[0,2cm]
	0={\partial \ell(\boldsymbol{\theta}) \over \partial\gamma_r}
	&=
	\sum_{i=1}^{n}
	\left[
	{u_i\over \sqrt{1-\rho_i^2}}\,
	{	
		f_{Z_{2i}\vert Z_{1i}} \big(\alpha_i\, (\frac{y_i - \mu_{1i}}{\sigma_i}) + \tau_i\, \big\vert\, {y_i-\mu_{1i}\over\sigma_i }\big)
		\over 
G_i \big(\alpha_i\, (\frac{y_i - \mu_{1i}}{\sigma_i}) + \tau_i\big)	
}
	+
	{f_{\rho_i Z_{1i}+\sqrt{1-\rho_i^2}\, Z_{2i}}(-\mu_{2i}) \over  1-H_i(\mu_{2i})}
	\right]
	{\partial \mu_{2i} \over \partial\gamma_r};
	\\[0,2cm]
	0={\partial \ell(\boldsymbol{\theta}) \over \partial\lambda_s}
	&=
	\sum_{i=1}^{n}
	{u_i\over\sigma_i}
	\Biggl[
	-1
	-
	{(y_i-\mu_{1i})\over\sigma_i}\,
{
	f'_{Z_{1i}}(\frac{y_i - \mu_{1i}}{\sigma_i})
	\over
	f_{Z_{1i}}(\frac{y_i - \mu_{1i}}{\sigma_i})
}
	-
	\alpha_i\, \frac{(y_i - \mu_{1i})}{\sigma_i}\,
\frac{
	f_{Z_{2i}\vert Z_{1i}} \big(\alpha_i\, (\frac{y_i - \mu_{1i}}{\sigma_i}) + \tau_i\, \big\vert\, {y_i-\mu_{1i}\over\sigma_i }\big)
}
{G_i \big(\alpha_i\, (\frac{y_i - \mu_{1i}}{\sigma_i}) + \tau_i\big)}
	\Biggr]
	{\partial \sigma_{i} \over \partial\lambda_s};
	\\[0,2cm]
	0={\partial \ell(\boldsymbol{\theta}) \over \partial\kappa_m}
	&=
	\sum_{i=1}^{n}	
{u_i\over \sqrt{1-\rho_i^2}}\,
	\Biggl[
{({y_i - \mu_{1i}\over \sigma_i})\over 1-\rho_i^2}
-
\mu_{2i}\rho_i
\Biggr]
\frac{
	f_{Z_{2i}\vert Z_{1i}} \big(\alpha_i\, (\frac{y_i - \mu_{1i}}{\sigma_i}) + \tau_i\, \big\vert\, {y_i-\mu_{1i}\over\sigma_i }\big)
}
{G_i \big(\alpha_i\, (\frac{y_i - \mu_{1i}}{\sigma_i}) + \tau_i\big)}\,
	{\partial\rho_i \over\partial\kappa_m};
\end{align*}
}\noindent
where
\begin{align*}
{\partial \mu_{1i} \over \partial\beta_j}
&=
{x_{ij}\over g_1'(\mu_{1i})}, \quad j=1,\ldots,k; \ i=1,\ldots,n;
\\[0,2cm]
{\partial \mu_{2i} \over \partial\gamma_r}
&=
{w_{ir}\over g_2'(\mu_{2i})}, \quad r=1,\ldots,l; \ i=1,\ldots,n;
\\[0,2cm]
{\partial \sigma_{i} \over \partial\lambda_s}
&=
{z_{is}\over h_1'(\sigma_i)}, \quad s=1,\ldots,p; \ i=1,\ldots,n;
\\[0,2cm]
{\partial\rho_i \over\partial\kappa_m}
&=
{v_{im}\over h_2'(\rho_i)}, \quad m=1,\ldots,q; \ i=1,\ldots,n.
\end{align*}

\section{Generalized Heckman-Student-$t$ model}\label{sec:3}

The generalized Heckman-normal model proposed by \cite{Bastos2021} is a special case of \eqref{errors:gen} when the underlying distribution is bivariate normal. In this work, we focus on the generalized Heckman-$t$ model, which is based on the bivariate Student-$t$ (B$t$) distribution. This distribution is a good alternative in the symmetric family of distributions because it possesses has heavier tails than the bivariate normal distribution. From \eqref{eq:pdf:sym}, if $\boldsymbol{Y}=(Y_1,Y_2)^{\top}$ follows a B$t$ distribution, then the associated PDF is given by
\begin{eqnarray}\label{eq:EST}
f(\boldsymbol{y}; \boldsymbol{\mu};  \boldsymbol{\Sigma}, \nu) 
=
{\frac{1}{|\boldsymbol{\Sigma}|^{1/2} Z_{g_c}}}
\left\{ 1 + \frac{(\boldsymbol{y} - \boldsymbol{\mu})^\top \boldsymbol{\Sigma} ^{-1}(\boldsymbol{y} - \boldsymbol{\mu})}{\nu} \right\}^{-(\nu+2)/2},
\end{eqnarray}
where $\nu$ is the number of degrees of freedom. Here, the density generator of the B$t$ distribution is given by $g_c(x)=(1+x/\nu)^{-(\nu+2)/2}$ {, $|\boldsymbol{\Sigma}|=\sigma_i^2(1-\rho_i^2)$ and $Z_{g_c}= [\nu\pi\Gamma(\nu/2)]/\Gamma((\nu+2)/2)$ is a normalization constant.} Therefore, if $({Y}^*_i, {U}^*_i)$ follow a B$t$ distribution, then, by Corollary \ref{Student-gen}, the PDF of ${Y}^*_i|{U}^*_i>0$ is written as
\begin{eqnarray*}
	f_{{Y}_i^*|{U}_i^*>0} (y_i;\mu_{1i},\sigma^2_i,\alpha_i,\tau_i,\nu)
	=
	{1\over\sigma_i}\, f_\nu \biggl({y_i-\mu_{1i}\over\sigma_i }\biggr) \,
	\frac{
		F_{\nu+1}\Big(
		\sqrt{{\nu+1\over \nu+({y_i-\mu_{1i}\over\sigma_i })^2}}\ 
		\big(\tau_i+\alpha_i  ({y_i-\mu_{1i}\over\sigma_i })\big)
		\Big)
	}{
		F_\nu\big(\tau_i/\sqrt{1+\alpha_i^2}\, \big)
	},
\end{eqnarray*}
where $f_{\nu}$ and $F_\nu$ are the PDF and CDF, respectively, of a univariate Student-$t$ distribution with $\nu$ degrees of freedom,
$\alpha_i=\rho_i/\sqrt{1-\rho_i^2}$ and  $\tau_i=\mu_{2i}/\sqrt{1-\rho_i^2}$.
The log-likelihood for $\boldsymbol{\theta} = (\boldsymbol{\beta}^\top, \boldsymbol{\gamma}^\top, \boldsymbol{\lambda}^\top, \boldsymbol{\kappa}^\top,\nu)^{\top}$ is given by
\begin{align}\label{eq:loglik_student}
\ell(\boldsymbol{\theta})
&=
\sum_{i=1}^{n} \log f_{Y_i}(y_i;\boldsymbol{\theta})
\nonumber
\\[0,2cm]
&=
\sum_{i=1}^{n} 
\log\left\{
(F_\nu(-\mu_{2i}))^{1-u_i}
(F_\nu(\mu_{2i}))^{u_i}
(
f_{{Y}_i^*|{U}_i^*>0} (y_i;\mu_{1i},\sigma^2_i,\alpha_i,\tau_i,\nu)
)^{u_i}
\right\}
\nonumber
\\[0,3cm]
&=
\sum_{i=1}^{n}
u_i
\left[
-\log(\sigma_i)
+
\log
f_{\nu}\left(\frac{y_i - \mu_{1i}}{\sigma_i}\right)
+
\log
		F_{\nu+1}\left(
\sqrt{{\nu+1\over \nu+({y_i-\mu_{1i}\over\sigma_i })^2}}\ 
\big(\tau_i+\alpha_i  \big({y_i-\mu_{1i}\over\sigma_i }\big)\big)
\right)
\right]
\nonumber
\\[0,2cm]
&+
\sum_{i=1}^{n} (1 - u_i)\log F_\nu(-\mu_{2i}),
\end{align}
where $u_i = 1$ if $u_i^{*}>0$ and $u_i = 0$ otherwise, $\mu_{1i}$, $\mu_{2i}$, $\sigma_{i}$ and $\rho_{i}$ are as in \eqref{errors:gen}. The ML estimate of $\boldsymbol{\theta}$ is obtained by maximizing the log-likelihood function \eqref{eq:loglik_student}, that is, by equating the score vector $\dot{\ell}(\boldsymbol{\theta})$ (given in Subsection \ref{Maximum likelihood estimation}) to zero, providing the likelihood equations. They are solved using an iterative procedure for non-linear optimization, such as the BFGS quasi-Newton method.

\section{Monte Carlo simulation}\label{sec:4}
\noindent
In this section, we carry out Monte Carlo simulation studies to evaluate the performance of the ML estimators under the symmetric generalized Heckman model. We focus on the generalized Heckman-$t$ model and consider three different set of true parameter value, which leads to scenarios covering moderate to high censoring percentages. The studies consider simulated data generated from each scenario according to
\begin{equation}\label{eq:mc}
  \mu_{1i} = \beta_1 + \beta_2x_{1i} + \beta_3x_{2i},
\end{equation}
\begin{equation}
 \mu_{2i} = \gamma_1 + \gamma_2 x_{1i} + \gamma_3 x_{2i} + \gamma_4 x_{3i},
\end{equation}
\begin{equation}
    \log \sigma_i = \lambda_1 + \lambda_2 x_{1i},
\end{equation}
\begin{equation}\label{eq:mc2}
    \text{arctanh}\, \rho_i = \kappa_1 + \kappa_2 x_{1i},
\end{equation}
for $i = 1, \ldots, n$, $x_{1i}$, $x_{2i}$ and $x_{3i}$ are covariates obtained from a normal distribution in the interval (0,1). Moreover, the simulation scenarios consider sample size $n \in \{ 500, 1000, 2000\}$ and $\nu=4$, with $\text{NREP}=1000$ Monte Carlo replicates for each sample size. In the structure presented in \eqref{eq:mc} - \eqref{eq:mc2}, $\mu_{1i}$ is the primary interest equation, while $\mu_{2i}$ represents the selection equation.  The R software has been used to do all numerical calculations; see \cite{rmanual}.

The performance of the ML estimators are evaluated through the bias and mean squared error (MSE), computed from the Monte Carlo replicas as
\begin{equation}
 \widehat{\textrm{Bias}}(\widehat{\theta}) = \frac{1}{\text{NREP}} \sum_{i = 1}^{\text{NREP}} \widehat{\theta}^{(i)} - \theta
\quad \text{and}\quad
\widehat{\mathrm{MSE}}(\widehat{\theta}) = \frac{1}{\text{NREP}} \sum_{i = 1}^{\text{NREP}} (\widehat{\theta}^{(i)} - \theta)^2,
\end{equation}
where $\theta$ and $\widehat{\theta}^{(i)}$ are the true parameter value and its respective $i$-th ML estimate, and $\text{NREP}$ is the number of Monte Carlo replicas.

We consider the following sets of true parameter values for the regression structure in \eqref{eq:mc}-\eqref{eq:mc2}:
\begin{itemize}
 \item Scenario 1) $\boldsymbol\beta= (1.1, 0.7, 0.1)^{\top}$, $\boldsymbol\gamma = (0.9, 0.5, 1.1, 0.6)^{\top}$, and $\boldsymbol\lambda = (-0.4, 0.7)^{\top}$ and $\boldsymbol\kappa = (0.3, 0.5)^{\top}$.  
\item  Scenario 2) $\boldsymbol\beta= (1.0, 0.7, 1.1)^{\top}$, $\boldsymbol\gamma = (0.9, 0.5, 1.1, 0.6)^{\top}$, $\boldsymbol\lambda = (-0.2, 1.2)^{\top}$, and $\boldsymbol\kappa = (0.7, 0.3)^{\top}$ or $\boldsymbol\kappa = (-0.7, 0.3)^{\top}$. 
\item Scenario 3) $\boldsymbol\beta= (1.1, 0.7, 0.1)^{\top}$, $\boldsymbol\gamma = (0, 0.5, 1.1, 0.6)^{\top}$, $\boldsymbol\lambda = (-0.4, 1.2)^{\top}$, and $\boldsymbol\kappa = (-0.3, -0.3)^{\top}$ (moderate correlation) or $\boldsymbol\kappa = (-0.7, -0.7)^{\top}$ (strong correlation).
\end{itemize}

To keep the censoring proportion around 50\%, in Scenario 1 a threshold greater than zero was used, so $U_{i}^{*} > a$. According to \cite{Bastos2021}, in general the value of $a$ is zero, as any other value would be absorbed by the intercept, so considering another value does not cause problems for the model. In Scenario 2, the dispersion and correlation parameters were changed and the censoring proportion was maintained around 30\%. In Scenario 3, the censoring rate around 50\% was obtained by changing the parameters of the selection equation $\mu_{2i}$.

The ML estimation results for the Scenarios 1), 2) and 3) are presented in Tables \ref{tab:mc1}-\ref{tab:mc3}, respectively, wherein the bias and MSE are all reported. As the ML estimators are consistent and asymptotically normally
distributed, we expect the bias and MSE to approach zero as $n$ grows. Moreover, we expect that the performances
of the estimates deteriorate as the censoring proportion (\%) grows. A look at the results in Tables \ref{tab:mc1}-\ref{tab:mc3} allows us to conclude that, as the sample size increases, the bias and MSE both decrease, as expected. In addition, the performances of the estimates decrease when the censoring proportion increases.

\def\tablename{Table}
\begin{table}[h!]
\caption{Bias and MSE for the indicated ML estimates of the generalized Heckman-$t$ model parameters (Scenario 1).}
\centering
\footnotesize
\label{tab:mc1}
\begin{tabular}{lccccccc}
\hline
                                  &      &                 & \multicolumn{2}{c}{Generalized Heckman-$t$} &                 & \multicolumn{2}{c}{Generalized Heckman-$t$} \\

Parameters                        & n    & Censoring & Bias                 & MSE                & Censoring & Bias                 & MSE                \\
\hline
\multirow{3}{*}{$\beta_1$ \ \ \ \ 1.1}    & 500  & 30.8878         & 0.0034               & 0.0053             & 54.0284         & 0.0082               & 0.0148             \\
                                  & 1000 & 30.9352         & 0.0019               & 0.0025             & 53.421          & 0.0016               & 0.0060             \\
                                  & 2000 & 31.0035         & 0.0021               & 0.0011             & 52.8706         & 0.0058               & 0.0030             \\
\hline
\multirow{3}{*}{$\beta_2$ \ \ \ \ 0.7}    & 500  & 30.8878         & 0.0037               & 0.0014             & 54.0284         & 0.0060               & 0.0034             \\
                                  & 1000 & 30.9352         & 0.0020               & 0.0007             & 53.421          & 0.0033               & 0.0013             \\
                                  & 2000 & 31.0035         & 0.0016               & 0.0003             & 52.8706         & 0.0020               & 0.0005             \\
\hline
\multirow{3}{*}{$\beta_3$ \ \ \ \ 0.1}    & 500  & 30.8878         & 0.0009               & 0.002              & 54.0284         & -0.0014              & 0.0045             \\
                                  & 1000 & 30.9352         & -0.0003              & 0.0008             & 53.421          & 0.0003               & 0.0017             \\
                                  & 2000 & 31.0035         & 0.0008               & 0.0004             & 52.8706         & -0.0022              & 0.0008             \\
\hline
\multirow{3}{*}{$\gamma_1$ \ \ \ \ 0.9}   & 500  & 30.8878         & 0.0163               & 0.0137             & 54.0284         & -1.0167              & 1.0443             \\
                                  & 1000 & 30.9352         & 0.0055               & 0.0065             & 53.421          & -1.0095              & 1.0246             \\
                                  & 2000 & 31.0035         & -0.0008              & 0.0032             & 52.8706         & -0.9981              & 0.9980             \\
\hline
\multirow{3}{*}{$\gamma_2$ \ \ \ \ 0.5}   & 500  & 30.8878         & 0.0066               & 0.0091             & 54.0284         & 0.0071               & 0.0077             \\
                                  & 1000 & 30.9352         & 0.0064               & 0.0044             & 53.421          & 0.0026               & 0.0039             \\
                                  & 2000 & 31.0035         & 0.0025               & 0.0021             & 52.8706         & 0.0021               & 0.0019             \\
\hline
\multirow{3}{*}{$\gamma_3$ \ \ \ \ 1.1}   & 500  & 30.8878         & 0.0177               & 0.0194             & 54.0284         & 0.0131               & 0.0170             \\
                                  & 1000 & 30.9352         & 0.0081               & 0.0085             & 53.421          & 0.0090               & 0.0074             \\
                                  & 2000 & 31.0035         & 0.0032               & 0.0041             & 52.8706         & 0.0044               & 0.0038             \\
\hline
\multirow{3}{*}{$\gamma_4$ \ \ \ \ 0.6}   & 500  & 30.8878         & 0.0091               & 0.0110             & 54.0284         & 0.0076               & 0.0089             \\
                                  & 1000 & 30.9352         & 0.0081               & 0.0085             & 53.421          & 0.0043               & 0.0045             \\
                                  & 2000 & 31.0035         & 0.0028               & 0.0026             & 52.8706         & 0.0020               & 0.0020             \\
\hline
\multirow{3}{*}{$\kappa_1$ \ \ \ \ 0.3}   & 500  & 30.8878         & 0.0139               & 0.0564             & 54.0284         & 0.0034               & 0.0467             \\
                                  & 1000 & 30.9352         & 0.0048               & 0.0048             & 53.421          & 0.0080               & 0.0207             \\
                                  & 2000 & 31.0035         & -0.0005              & 0.0102             & 52.8706         & -0.0029              & 0.0097             \\
\hline
\multirow{3}{*}{$\kappa_2$ \ \ \ \ 0.5}   & 500  & 30.8878         & 0.0589               & 0.0554             & 54.0284         & 0.0416               & 0.0383             \\
                                  & 1000 & 30.9352         & 0.0054               & 0.0225             & 53.421          & 0.0202               & 0.0157             \\
                                  & 2000 & 31.0035         & 0.0120               & 0.0083             & 52.8706         & 0.0136               & 0.0064             \\
\hline
\multirow{3}{*}{$\lambda_1$ \ \ \ \ -0.4} & 500  & 30.8878         & 0.0011               & 0.0049             & 54.0284         & -0.0029              & 0.0075             \\
                                  & 1000 & 30.9352         & 0.0210               & 0.018              & 53.421          & 0.0009               & 0.0035             \\
                                  & 2000 & 31.0035         & 0.0018               & 0.0011             & 52.8706         & -0.0014              & 0.0018             \\
\hline
\multirow{3}{*}{$\lambda_2$ \ \ \ \ 0.7}  & 500  & 30.8878         & 0.0017               & 0.0031             & 54.0284         & 0.0037               & 0.0046             \\
                                  & 1000 & 30.9352         & 0.0006               & 0.0023             & 53.421          & 0.0026               & 0.0021             \\
                                  & 2000 & 31.0035         & 0.0007               & 0.0007             & 52.8706         & 0.0006               & 0.0010             \\
\hline
\multirow{3}{*}{$\nu$ \ \ \ \ \ 4 }               & 500  & 30.8878         & 0.3633               & 2.3387             & 54.0284         & 0.6463               & 8.4442             \\
                                  & 1000 & 30.9352         & 0.1506               & 0.5565             & 53.421          & 0.2221               & 0.8141             \\
                                  & 2000 & 31.0035         & 0.0833               & 0.2377             & 52.8706         & 0.0820               & 0.3078
   \\
\hline
\end{tabular}
\end{table}

\def\tablename{Table}
\begin{table}[h!]
\caption{Bias and MSE for the indicated ML estimates of the generalized Heckman-$t$ model parameters (Scenario 2).}
\centering
\footnotesize
\label{tab:mc2}
\begin{tabular}{llccclcccl}
\hline

     &                                  & \multicolumn{1}{c}{}                & \multicolumn{2}{c}{Generalized Heckman-$t$}          & \multicolumn{1}{c}{}             & \multicolumn{1}{c}{}                & \multicolumn{2}{c}{Generalized Heckman-$t$}          \\
n    & Parameters                       & \multicolumn{1}{c}{Censoring} & \multicolumn{1}{c}{Bias} & \multicolumn{1}{c}{MSE} & Parameters                       & \multicolumn{1}{c}{Censoring} & \multicolumn{1}{c}{Bias} & \multicolumn{1}{c}{MSE} \\
\hline
500  & \multirow{3}{*}{$\beta_1$ \ \ \ \ 1.0}   & 31.017                              & 0.0048                   & 0.0045                  & \multirow{3}{*}{$\beta_1$ \ \ \ \ 1.0}   & 30.9268                             & 0.0052                   & 0.0039                  \\
1000 &                                  & 30.9395                             & 0.0031                   & 0.0022                  &                                  & 30.8644                             & -0.0002                  & 0.0016                  \\
2000 &                                  & 30.9098                             & 0.0013                   & 0.0009                  &                                  & 30.9734                             & 0.0005                   & 0.0009                  \\
\hline
500  & \multirow{3}{*}{$\beta_2$ \ \ \ \ 0.7}   & 31.017                              & 0.0013                   & 0.0009                  & \multirow{3}{*}{$\beta_2$  \ \ \ \ 0.7}   & 30.9268                             & 0.0045                   & 0.0008                  \\
1000 &                                  & 30.9395                             & -0.0002                  & 0.0046                  &                                  & 30.8644                             & 0.0010                    & 0.0003                  \\
2000 &                                  & 30.9098                             & 0.0002                   & 0.0001                  &                                  & 30.9734                             & 0.0006                   & 0.0001                  \\
\hline
500  & \multirow{3}{*}{$\beta_3$ \ \ \ \ 1.1}   & 31.017                              & -0.001                   & 0.0011                  & \multirow{3}{*}{$\beta_3$ \ \ \ \ 1.1}   & 30.9268                             & 0.0001                   & 0.0007                  \\
1000 &                                  & 30.9395                             & 0.0015                   & 0.0008                  &                                  & 30.8644                             & 0.0009                   & 0.0003                  \\
2000 &                                  & 30.9098                             & -0.0007                  & 0.0002                  &                                  & 30.9734                             & 0.0003                   & 0.0001                  \\
\hline

500  & \multirow{3}{*}{$\gamma_1$ \ \ \ \ 0.9}  & 31.017                              & 0.0071                   & 0.0141                  & \multirow{3}{*}{$\gamma_1$ \ \ \ \ 0.9}  & 30.9268                             & 0.0108                   & 0.0153                  \\
1000 &                                  & 30.9395                             & 0.0165                   & 0.0814                  &                                  & 30.8644                             & 0.0205                   & 0.0555                  \\
2000 &                                  & 30.9098                             & 0.0036                   & 0.0031                  &                                  & 30.9734                             & 0.0027                   & 0.0034                  \\
\hline
500  & \multirow{3}{*}{$\gamma_2$ \ \ \ \ 0.5}  & 31.017                              & 0.0072                   & 0.0090                   & \multirow{3}{*}{$\gamma_2$ \ \ \ \ 0.5}  & 30.9268                             & 0.0111                   & 0.0096                  \\
1000 &                                  & 30.9395                             & 0.0076                   & 0.0101                  &                                  & 30.8644                             & 0.0115                   & 0.0270                   \\
2000 &                                  & 30.9098                             & -0.0005                  & 0.0020                   &                                  & 30.9734                             & 0.0016                   & 0.0035                  \\
\hline
500  & \multirow{3}{*}{$\gamma_3$ \ \ \ \ 1.1}  & 31.017                              & 0.013 0                   & 0.0180                   & \multirow{3}{*}{$\gamma_3$ \ \ \ \ 1.1}  & 30.9268                             & 0.0143                   & 0.0175                  \\
1000 &                                  & 30.9395                             & 0.0151                   & 0.0254                  &                                  & 30.8644                             & 0.0170                    & 0.0450                   \\
2000 &                                  & 30.9098                             & 0.0020                    & 0.0041                  &                                  & 30.9734                             & 0.0028                   & 0.0051                  \\
\hline
500  & \multirow{3}{*}{$\gamma_4$ \ \ \ \ 0.6}  & 31.017                              & 0.0067                   & 0.0102                  & \multirow{3}{*}{$\gamma_4$ \ \ \ \ 0.6}  & 30.9268                             & -0.0003                  & 0.0094                  \\
1000 &                                  & 30.9395                             & 0.0064                   & 0.0195                  &                                  & 30.8644                             & 0.0077                   & 0.0139                  \\
2000 &                                  & 30.9098                             & 0.001                    & 0.0023                  &                                  & 30.9734                             & 0.0023                   & 0.0024                  \\
\hline
500  & \multirow{3}{*}{$\kappa_1$ \ \ \ \ 0.7}  & 31.017                              & 0.0299                   & 0.0560                  & \multirow{3}{*}{$\kappa_1$ \ \ \ \ -0.7}  & 30.9268                             & -0.0449                  & 0.0662                  \\
1000 &                                  & 30.9395                             & 0.0197                   & 0.0628                  &                                  & 30.8644                             & -0.0198                  & 0.033                   \\
2000 &                                  & 30.9098                             & 0.0031                   & 0.0089                  &                                  & 30.9734                             & -0.0058                  & 0.0218                  \\
\hline
500  & \multirow{3}{*}{$\kappa_2$ \ \ \ \ 0.3}  & 31.017                              & 0.052                    & 0.0702                  & \multirow{3}{*}{$\kappa_2$ \ \ \ \ 0.3}  & 30.9268                             & 0.0363                   & 0.067                   \\
1000 &                                  & 30.9395                             & 0.0197                   & 0.0576                  &                                  & 30.8644                             & 0.0141                   & 0.0632                  \\
2000 &                                  & 30.9098                             & 0.0097                   & 0.0097                  &                                  & 30.9734                             & 0.0108                   & 0.0345                  \\
\hline
500  & \multirow{3}{*}{$\lambda_1$ \ \ \ \ -0.2} & 31.017                              & 0.0004                   & 0.0053                  & \multirow{3}{*}{$\lambda_1$ \ \ \ \ -0.2} & 30.9268                             & 0.0043                   & 0.0054                  \\
1000 &                                  & 30.9395                             & -0.0018                  & 0.0041                  &                                  & 30.8644                             & -0.0031                  & 0.0039                  \\
2000 &                                  & 30.9098                             & 0.0019                   & 0.0012                  &                                  & 30.9734                             & -0.0004                  & 0.0013                  \\
\hline
500  & \multirow{3}{*}{$\lambda_2$ \ \ \ \ 1.2} & 31.017                              & 0.0032                   & 0.0038                  & \multirow{3}{*}{$\lambda_2$ \ \ \ \ 1.2} & 30.9268                             & 0.0096                   & 0.0035                  \\
1000 &                                  & 30.9395                             & 0.0018                   & 0.0020                   &                                  & 30.8644                             & 0.0058                   & 0.0016                  \\
2000 &                                  & 30.9098                             & -0.0001                  & 0.0007                  &                                  & 30.9734                             & 0.0024                   & 0.0010                   \\
\hline
500  & \multirow{3}{*}{$\nu$ \ \ \ \ \ 4 }               & 31.017                              & 0.4404                   & 2.3837                  & \multirow{3}{*}{$\nu$ \ \ \ \ \ 4 }               & 30.9268                             & 0.5096                   & 17.9132                 \\
1000 &                                  & 30.9395                             & 0.1418                   & 0.6321                  &                                  & 30.8644                             & 0.1619                   & 0.6922                  \\
2000 &                                  & 30.9098                             & 0.1234                   & 0.2899                  &                                  & 30.9734                             & 0.0789                   & 0.2731                 \\
\hline
\end{tabular}
\end{table}


\begin{table}[h!]
\caption{Bias and MSE for the indicated ML estimates of the generalized Heckman-$t$ model parameters (Scenario 3).}
\centering
\footnotesize
\label{tab:mc3}
\begin{tabular}{llccclcccl}
\hline

     &                                  &                 & \multicolumn{2}{c}{Generalized Heckman-$t$} & \multicolumn{1}{c}{}             & \multicolumn{1}{c}{}                & \multicolumn{2}{c}{Generalized Heckman-$t$}          \\
n    & Parameters                       & Censoring & Bias                 & MSE                & Parameters                       & \multicolumn{1}{c}{Censoring} & \multicolumn{1}{c}{Bias} & \multicolumn{1}{c}{MSE} \\
\hline
500  & \multirow{3}{*}{$\beta_1$ \ \ \ \ 1.1}   & 49.8604         & -0.0037              & 0.0094             & \multirow{3}{*}{$\beta_1$ \ \ \ \ 1.1}   & 49.9824                             & -0.0109                  & 0.0072                  \\
1000 &                                  & 49.9469         & -0.0008              & 0.0035             &                                  & 49.9075                             & -0.0049                  & 0.003                   \\
2000 &                                  & 50.0265         & -0.0012              & 0.0017             &                                  & 49.925                              & -0.0046                  & 0.0011                  \\
\hline

500  & \multirow{3}{*}{$\beta_2$ \ \ \ \ 0.7}   & 49.8604         & -0.0025              & 0.0016             & \multirow{3}{*}{$\beta_2$ \ \ \ \ 0.7}   & 49.9824                             & -0.0057                  & 0.0014                  \\
1000 &                                  & 49.9469         & -0.0006              & 0.0005             &                                  & 49.9075                             & -0.0026                  & 0.0005                  \\
2000 &                                  & 50.0265         & -0.0009              & 0.0002             &                                  & 49.925                              & -0.0025                  & 0.0002                  \\
\hline

500  & \multirow{3}{*}{$\beta_3$ \ \ \ \ 0.1}   & 49.8604         & 0.0001               & 0.0014             & \multirow{3}{*}{$\beta_3$ \ \ \ \ 0.1}   & 49.9824                             & 0.0017                   & 0.0011                  \\
1000 &                                  & 49.9469         & <0.0001                    & 0.0005             &                                  & 49.9075                             & 0.0004                   & 0.0004                  \\
2000 &                                  & 50.0265         & 0.0005               & 0.0002             &                                  & 49.925                              & -0.0002                  & 0.0002                  \\
\hline
500  & \multirow{3}{*}{$\gamma_1$ \ \ \ \ 0}    & 49.8604         & 0.0025               & 0.0065             & \multirow{3}{*}{$\gamma_1$ \ \ \ \ 0}    & 49.9824                             & 0.0012                   & 0.0063                  \\
1000 &                                  & 49.9469         & 0.0036               & 0.003              &                                  & 49.9075                             & 0.0029                   & 0.004                   \\
2000 &                                  & 50.0265         & -0.0023              & 0.0015             &                                  & 49.925                              & 0.0022                   & 0.0014                  \\
\hline
500  & \multirow{3}{*}{$\gamma_2$ \ \ \ \ 0.5}  & 49.8604         & 0.0067               & 0.0078             & \multirow{3}{*}{$\gamma_2$ \ \ \ \ 0.5}  & 49.9824                             & 0.0051                   & 0.0071                  \\
1000 &                                  & 49.9469         & 0.0061               & 0.0035             &                                  & 49.9075                             & 0.0036                   & 0.0051                  \\
2000 &                                  & 50.0265         & 0.0009               & 0.0019             &                                  & 49.925                              & 0.0003                   & 0.0015                  \\
\hline
500  & \multirow{3}{*}{$\gamma_3$ \ \ \ \ 1.1}  & 49.8604         & 0.009                & 0.0161             & \multirow{3}{*}{$\gamma_3$ \ \ \ \ 1.1}  & 49.9824                             & 0.0095                   & 0.0155                  \\
1000 &                                  & 49.9469         & 0.0047               & 0.008              &                                  & 49.9075                             & 0.0116                   & 0.0185                  \\
2000 &                                  & 50.0265         & 0.0024               & 0.0037             &                                  & 49.925                              & 0.0046                   & 0.0035                  \\
\hline
500  & \multirow{3}{*}{$\gamma_4$ \ \ \ \ 0.6}  & 49.8604         & 0.0010                & 0.0091             & \multirow{3}{*}{$\gamma_4$ \ \ \ \ 0.6}  & 49.9824                             & 0.0048                   & 0.0083                  \\
1000 &                                  & 49.9469         & 0.0014               & 0.0042             &                                  & 49.9075                             & 0.0056                   & 0.0086                  \\
2000 &                                  & 50.0265         & -0.001               & 0.0023             &                                  & 49.925                              & 0.0038                   & 0.002                   \\
\hline
500  & \multirow{3}{*}{$\kappa_1$ \ \ \ \ -0.3} & 49.8604         & -0.0105              & 0.04               & \multirow{3}{*}{$\kappa_1$ \ \ \ \ -0.7} & 49.9824                             & -0.0166                  & 0.0437                  \\
1000 &                                  & 49.9469         & -0.0107              & 0.0173             &                                  & 49.9075                             & -0.0061                  & 0.0183                  \\
2000 &                                  & 50.0265         & -0.0015              & 0.0085             &                                  & 49.925                              & -0.0034                  & 0.0078                  \\
\hline
500  & \multirow{3}{*}{$\kappa_2$ \ \ \ \ -0.3} & 49.8604         & -0.0302              & 0.0394             & \multirow{3}{*}{$\kappa_2$ \ \ \ \ -0.7} & 49.9824                             & -0.0831                  & 0.0609                  \\
1000 &                                  & 49.9469         & -0.0189              & 0.0163             &                                  & 49.9075                             & -0.0317                  & 0.0195                  \\
2000 &                                  & 50.0265         & -0.0053              & 0.007              &                                  & 49.925                              & -0.0214                  & 0.0086                  \\
\hline
500  & \multirow{3}{*}{$\lambda_1$ \ \ \ \ -0.4} & 49.8604         & 0.0032               & 0.0067             & \multirow{3}{*}{$\lambda_1$ \ \ \ \ -0.4} & 49.9824                             & -0.0041                  & 0.0072                  \\
1000 &                                  & 49.9469         & -0.0003              & 0.0035             &                                  & 49.9075                             & -0.006                   & 0.0034                  \\
2000 &                                  & 50.0265         & 0.0005               & 0.0016             &                                  & 49.925                              & -0.0033                  & 0.0018                  \\
\hline
500  & \multirow{3}{*}{$\lambda_2$ \ \ \ \ 1.2} & 49.8604         & 0.0062               & 0.004              & \multirow{3}{*}{$\lambda_2$ \ \ \ \ 1.2} & 49.9824                             & 0.0039                   & 0.0038                  \\
1000 &                                  & 49.9469         & 0.0057               & 0.0019             &                                  & 49.9075                             & 0.0029                   & 0.0018                  \\
2000 &                                  & 50.0265         & 0.0026               & 0.0009             &                                  & 49.925                              & 0.0016                   & 0.0008                  \\
\hline
500  & \multirow{3}{*}{$\nu$ \ \ \ \ \ 4 }              & 49.8604         & 0.5755               & 3.9653             & \multirow{3}{*}{$\nu$ \ \ \ \ \ 4 }             & 49.9824                             & 0.546                    & 8.083                   \\
1000 &                                  & 49.9469         & 0.2316               & 0.8686             &                                  & 49.9075                             & 0.1677                   & 0.793                   \\
2000 &                                  & 50.0265         & 0.1139               & 0.3197             &                                  & 49.925                              & 0.0841      & 0.4107

\\
\hline
\end{tabular}
\end{table}

\clearpage

\section{Application to real data} \label{sec:5}

In this section, two real data sets, corresponding to outpatient expense and investments in education, are analyzed. The outpatient expense data data set has already been analyzed in the literature by Heckman models \cite{Genton2012}, whereas the education investment data is new and is analyzed for the first time here.

\subsection{Outpatient expense}\label{sec:5.1}

In this subsection, a real data set corresponding to outpatient expense from the 2001 Medical Expenditure Panel Survey (MEPS) is used to illustrate the proposed methodology. This data set has information about the cost and provision of outpatient services, and is the most complete coverage about health insurance in the United States, according to the Agency for Healthcare Research and Quality (AHRQ).

The MEPS data set contains information collected from 3328 individuals between 21 and 64 years. The variable of interest is the expenditure on medical services in the logarithm scale ($Y_{i}^{*}=lnambx$), while the latent variable ($U_{i}^{*} = dambexp$) is the willingness of the individual to spend; $U_i=\mathds{1}_{\{U_{i}^{*}>0\}}$ corresponds the decision of the individual to spend. It was verified that 526 (15.8\%) of the outpatient costs are identified as zero (censored). The covariates considered in the are: $age$ is the age measured in tens of years; $fem$ is a dummy variable that assumed value 1 for women and 0 for men; $educ$ is the years of education; $blhisp$ is a dummy variable for ethnicity (1 for black or Hispanic and 0 if non-black and non-Hispanic); $totcr$ is the total number of chronic diseases; $ins$ is the insurance status; and $revenue$ denotes the individual income.

Table~\ref{tab:estdesc_app1} reports descriptive statistics of the observed medical expenditures, including the minimum, mean, median, maximum, standard deviation (SD), coefficient of variation (CV), coefficient of skewness (CS) and coefficient of (excess) kurtosis (CK) values. From this table, we note the following: the mean is almost equal to the median; a very small negative skewness value; and a very small kurtosis value. The symmetric nature of the data is confirmed by the histogram shown in Figure Figure~\ref{fig:hist_box_app1}(a). The boxplot shown in Figure~\ref{fig:hist_box_app1}(b) indicates some potential outliers.
Therefore, we observe that a symmetric distribution is a reasonable assumption, more specifically a Student-$t$ model, since since we have to accommodate outliers.

\begin{table}[!ht]
\centering
\caption{Summary statistics for the medical expenditure data.}
\label{tab:estdesc_app1}
\begin{tabular}{ccccccccccccccccc}
\hline
 minimum &  Mean   & Median  & maximum  &  SD     &  CV       &  CS    &  CK    &  $n$ \\
0.693    & 6.557     & 6.659 & 10.819   & 1.406   & 21.434\%  & -0.315 & 0.006  &  2801 \\
\hline
\end{tabular}
\end{table}

\begin{figure}[h!]
\vspace{0.25cm}
\subfigure[]{\includegraphics[height=7cm,width=7.25cm]{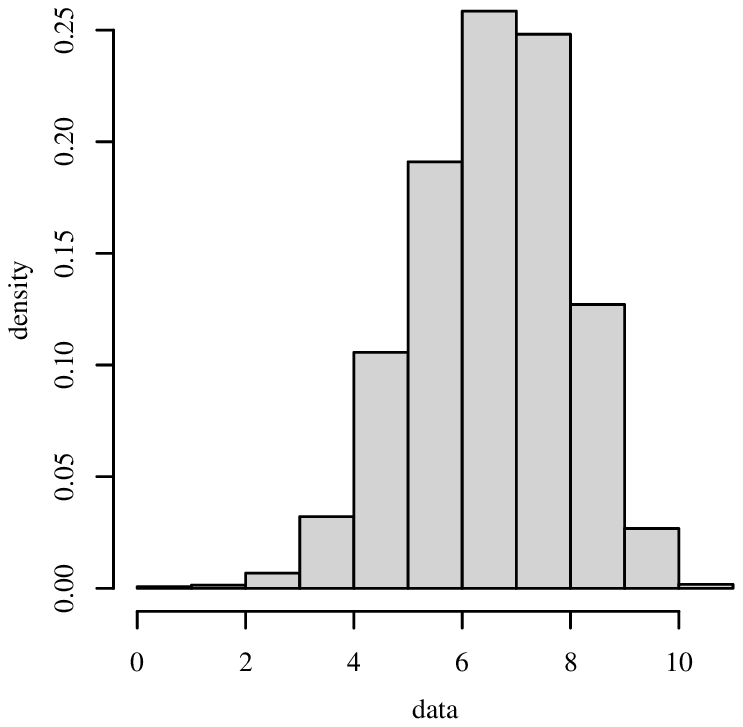}}
\subfigure[]{\includegraphics[height=7cm,width=7.25cm]{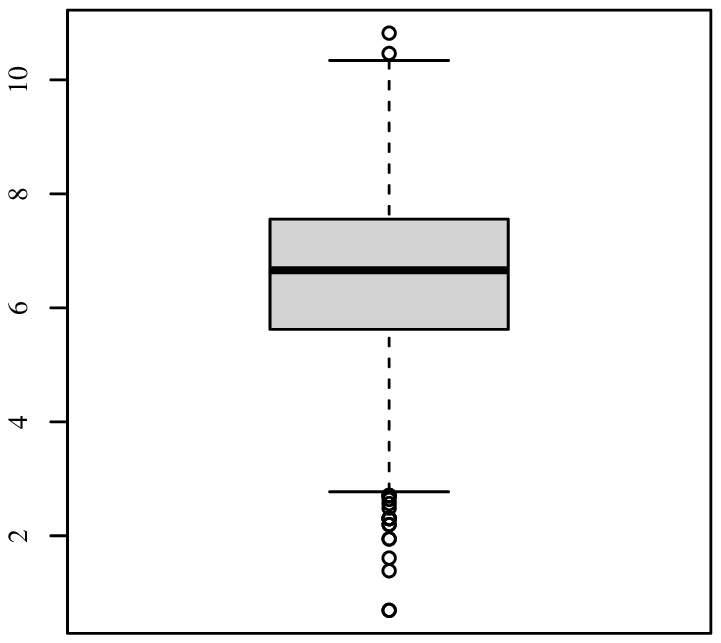}}
\caption{\small Histogram  (a) and boxplots (b) for the medical expenditure data.}
  \label{fig:hist_box_app1}
\end{figure}

We then analyze the medical expenditure data using the generalized Heckman-$t$ model, expressed as
\begin{equation}
\label{eq1}
lnambx=\beta_0+ \beta_1\, age_i+\beta_2\,fem_i+\beta_3\,educ_i+\beta_4\, blhisp_i+\beta_5\,totchr_i+\beta_6\,ins_i,
\end{equation}
\begin{equation}
\label{eq2}
dambexp= \gamma_0+\gamma_1\,age_i+\gamma_2\,fem_i+\gamma_3\,educ_i+\gamma_4\,blhisp_i+\gamma_5\,totchr_i+\gamma_6\,ins_i+\gamma_7\,income_i,
\end{equation}
\begin{equation}
    \log \sigma_i = \lambda_0 + \lambda_1\, age_i + \lambda_2\, totchr_i + \lambda_3\, ins_i,
\end{equation}
\begin{equation}
\text{arctanh}\, \rho_i = \kappa_0 + \kappa_1\, fem_i + \kappa_2\, totchr_i.
\end{equation}

We initially compare the adjustments of the generalized Heckman-$t$ (GH$t$) model, in terms of Akaike (AIC) and Bayesian information (BIC), with the adjustments of the classical Heckman-normal (CHN) \citep{heckman76} and generalized Heckman-normal (GHN) \citep{Bastos2021} models; see Table \ref{AIC_BIC_meps}. The AIC and BIC values reveal that the GH$t$ model provides the best adjustment, followed by the GHN model.

\def\tablename{Table}
\begin{table}[h!]
\caption{AIC and BIC of the indicated Heckman models.}
\centering
\small
\begin{tabular}{ccccccccc}
\hline

    & CHN & GHN      & GH$t$    \\
\hline
AIC & 11706.44   & 11660.29 & 11635.05\\
BIC & 11810.31   & 11794.71  & 11775.59 \\
\hline
\end{tabular}
\label{AIC_BIC_meps}
\end{table}

Table \ref{tab_meps} presents the estimation results of the GHN and GH$t$ models. From this table, we observe the following results: the explanatory variables $totchr$ and $ins$ that model the dispersion are significant, in both models, indicating the presence of heteroscedasticity in the data. The explanatory variable $age$ is only significant in the GHN model. For the correlation term, the covariates $fem$ and $totchr$ are significant for both models, which indicates the presence of selection bias in the data.

In the outcome equation, when we look at both models, $age$, $fem$, $blhisp$ and $totchr$ are significant at the 5\% level, and $educ$ is not significant. The explanatory variable $ins$ is significant at the 5\% and 10\% levels in the GHN and GH$t$ models, respectively; see Table \ref{tab_meps}. We can interpret the estimated coefficients in terms of the effect on the expenditure on medical services; see \citet[p.82]{weisberg:14}. For example, a 1-year increase in $age$ rises by
$(\exp(0.1838)-1)\times100 = 20.18\%$ and $(\exp(0.1895)-1)\times100 = 20.86\%$ the expected value of the expenditure on medical services according to the GHN and GH$t$, respectively. Moreover, a 1-unit increase in $totchr$ rises by
$(\exp(0.4306)-1)\times100 = 53.82\%$ and $(\exp(0.4464)-1)\times100 = 56.27\%$ the expected value of the response according to the GHN and GH$t$ models, respectively.

In the case of the selection equation, we observe that, for both models, the explanatory variables $age$, $fem$, $educ$, $blhisp$, $totchr$ and $ins$ are significant at the 5\% level and $revenue$ is significant at the 10\% level; see Table \ref{tab_meps}. The interpretation is made in terms of odds ratio, which is obtained by exponentiating the estimated explanatory variable coefficient. For example, for the GH$t$ case, the odds ratio for $age$ is $\exp(0.0930) = 1.0975$, suggesting that each additional year of age raises the likelihood of an individual having expenditures on medical services by $ ((1.0975 - 1)\times  100)=9.75\%$.

\def\tablename{Table}
\begin{table}[h!]
\caption{Estimation results of the GHN and GH$t$ models.}
\centering
\scriptsize
\begin{tabular}{lrrrrrrrrrrrrr}

\hline
\\
\vspace{-.5cm}
\\
\multicolumn{9}{c}{\small{Probit selection equation}}
\vspace{0.1cm}
\\
\hline

\multirow{2}{*}{Variables} & \multicolumn{2}{c}{Estimates} & \multicolumn{2}{c}{Std. Error} & \multicolumn{2}{c}{$t$ Value} & \multicolumn{2}{c}{$p$-value}               \\
                           & GHN   & GH$t$    & GHN    & GH$t$    & GHN   & GH$t$   & GHN         & GH$t$          \\
\hline
$(Intercept)$ & -0.5903 & -0.6406 & 0.1867 & 0.2014 & -3.1620 & -3.1800 & \textless{}0.001            & \textless{}0.001           \\
$age$         & 0.0863  & 0.0930  & 0.0264 & 0.0288 & 3.2610  & 3.2230  & \textless{}0.001            & \textless{}0.001           \\
$fem$         & 0.6299  & 0.7087  & 0.0597 & 0.0681 & 1.0543 & 1.0395 & \textless{}0.001 & \textless{}0.001  \\
$educ$        & 0.0569 & 0.0590  & 0.0114 & 0.0122 & 4.9830  & 4.8110  & \textless{}0.001         & \textless{}0.001        \\
$blhisp$      & -0.3368 & -0.3726 & 0.0596 & 0.0647 & -5.6430 & -5.7590 & \textless{}0.001        & \textless{}0.001          \\
$totchr$      & 0.7585  & 0.8728  & 0.0686 & 0.0858 & 1.1043 & 1.0168 & \textless{}0.001  &\textless{}0.001  \\
$ins$         & 0.1727  & 0.1863  & 0.0611 & 0.0665 & 2.8240  & 2.7990  & \textless{}0.001            & \textless{}0.001           \\
$revenue$      & 0.0022  & 0.0025 & 0.0012 & 0.0013 & 1.8380  & 1.8750  & 0.0661            & 0.0610     \\
\hline
\\
\vspace{-.5cm}
\\
\multicolumn{9}{c}{\small{Outcome equation}}
\vspace{0.1cm}
\\
\hline
\multirow{2}{*}{Variables} & \multicolumn{2}{c}{Estimates} & \multicolumn{2}{c}{Std. Error} & \multicolumn{2}{c}{$t$ Value} & \multicolumn{2}{c}{$p$-value}               \\

                           & GHN    & GH$t$    & GHN    & GH$t$   & GHN    & GH$t$   & GHN         & GH$t$          \\
\hline
$(Intercept)$& 5.7041   & 5.6078   & 0.1930 & 0.1912 & 2.9553 & 2.9316 & \textless{}0.001  & \textless{}0.001 \\
$age$       & 0.1838  & 0.1895  & 0.0234 & 0.0230 & 7.8460  & 8.2350  & \textless{}0.001          & \textless{}0.001         \\
$fem$    & 0.2498  & 0.2555  & 0.0587 & 0.0580 & 4.2530  & 4.4000  & \textless{}0.001          & \textless{}0.001          \\
$educ$      & 0.0013  & 0.0062  & 0.0101 & 0.0100 & 0.1290 & 0.6240 & 0.8970              & 0.5329              \\
$blhisp$    & -0.1283 & -0.1344 & 0.0577 & 0.0569 & -2.2210 & -2.3630 & 0.0264             & 0.0182             \\
$totchr$    & 0.4306  & 0.4464  & 0.0305 & 0.0297 & 1.4115 & 1.5002 & \textless{}0.001  & \textless{}0.001  \\
$ins$       & -0.1027 & -0.0976 & 0.0513 & 0.0501 & -2.000 & -1.9470 & 0.0456             & 0.0516    \\
\hline
\\
\vspace{-.5cm}
\\
\multicolumn{9}{c}{\small{Dispersion}}
\vspace{0.1cm}
\\
\hline
\multirow{2}{*}{Variables} & \multicolumn{2}{c}{Estimates} & \multicolumn{2}{c}{Std. Error} & \multicolumn{2}{c}{$t$ Value} & \multicolumn{2}{c}{$p$-value}
\\

                           & GHN     & GH$t$    & GHN     & GH$t$    & GHN   & GH$t$   & GHN   & GH$t$   \\
\hline

$Intercept)$ & 0.5081  & 0.4172  & 0.0573 & 0.0643 & 8.8550  & 6.4820  & \textless{}0.001  & \textless{}0.001  \\
$age$        & -0.0249 & -0.0209 & 0.0125 & 0.0136 & -1.9870 & -1.5360 & 0.0469           & 0.1246    \\
$totchr$     & -0.1046 & -0.1118 & 0.0191 & 0.0208 & -5.4760 & -5.3720 & \textless{}0.001         & \textless{}0.001  \\
$ins$        & -0.1070 & -0.1117 & 0.0277 & 0.0303 & -3.8630 & -3.6810 & \textless{}0.001         & \textless{}0.001  \\
\hline
\\
\vspace{-.5cm}
\\
\multicolumn{9}{c}{\small{Correlation}}
\vspace{0.1cm}
\\
\hline
\multirow{2}{*}{Variables} & \multicolumn{2}{c}{Estimates} & \multicolumn{2}{c}{Std. Error} & \multicolumn{2}{c}{$t$ Value} & \multicolumn{2}{c}{$p$-value} \\
                           & GHN     & GH$t$    & GHN     & GH$t$    & GHN   & GH$t$   & GHN   & GH$t$   \\
\hline
$Intercept$  & -0.6475 & -0.6051 & 0.1143 & 0.1118 & -5.666 & -5.413 & \textless{}0.001  & \textless{}0.001  \\
$fem$     & -0.4040 & -0.4220 & 0.1356 & 0.1489 & -2.978 & -2.835 & \textless{}0.001   & \textless{}0.001   \\
$totchr$     & -0.4379 & -0.4999 & 0.1862 & 0.2102 & -2.351 & -2.378 & 0.0187    & 0.0174   \\
\hline
$\nu$ & \ \ \ \ - &  12.3230   & \ \ \ \ - &  2.7570 & \ \ \ \ - & 4.4690 & \ \ \ \ - & \textless{}0.001 \\

\hline
\end{tabular}
\label{tab_meps}
\end{table}

Figure~\ref{fig:residuals1} displays the quantile versus quantile (QQ) plots of the martingale-type (MT) residuals for the GHN and GH$t$ models. This residual is given by

\begin{equation}
\label{res}
{r}_{{i}}^{\text{MT}}=\textrm{sign}(r^{_{\textrm{\tiny
M}_{i}}})\sqrt{-2\left( r^{_{\textrm{\tiny
M}_{i}}}+u_i\log(u_i- r^{_{\textrm{\tiny M}_{i}}})\right)},\quad i=1,\dots,n.
\end{equation}
where $r^{_{\textrm{\tiny M}_{i}}} = u_i + \log(\widehat S(t_i))$, $\widehat S(t_i)$ is the fitted survival function, and $u_i=0$ or $1$ indicating that case $i$ is censored or not, respectively; see \cite{tgf:90}. The MT residual is asymptotically standard normal, if the model is correctly specified whatever the specification of the model is. From Figure~\ref{fig:residuals1}, we see clearly that the GH$t$ model provides better fit than GHN model.

\begin{figure}[h!]
    \centering
    \subfigure[GHN]{\includegraphics[width=7cm, height=7cm]{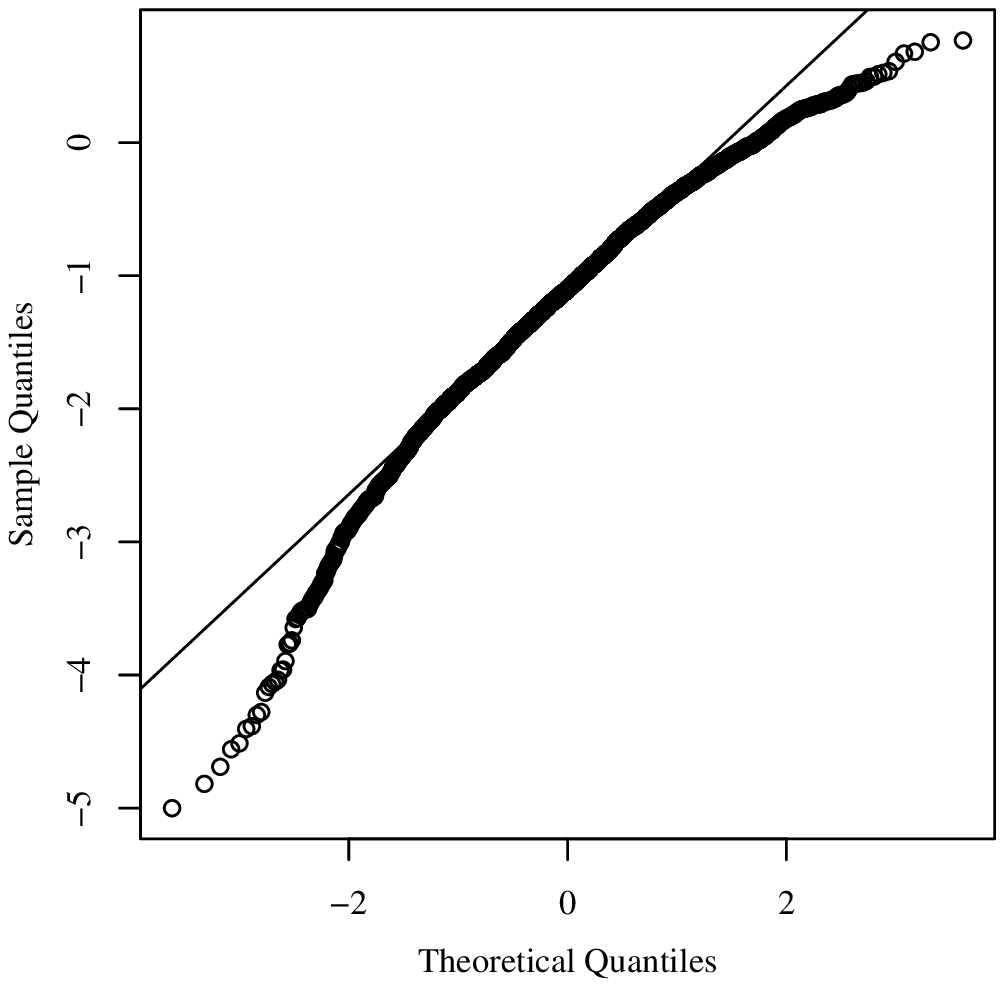}}
    \subfigure[GH$t$]{\includegraphics[width=7cm, height=7cm]{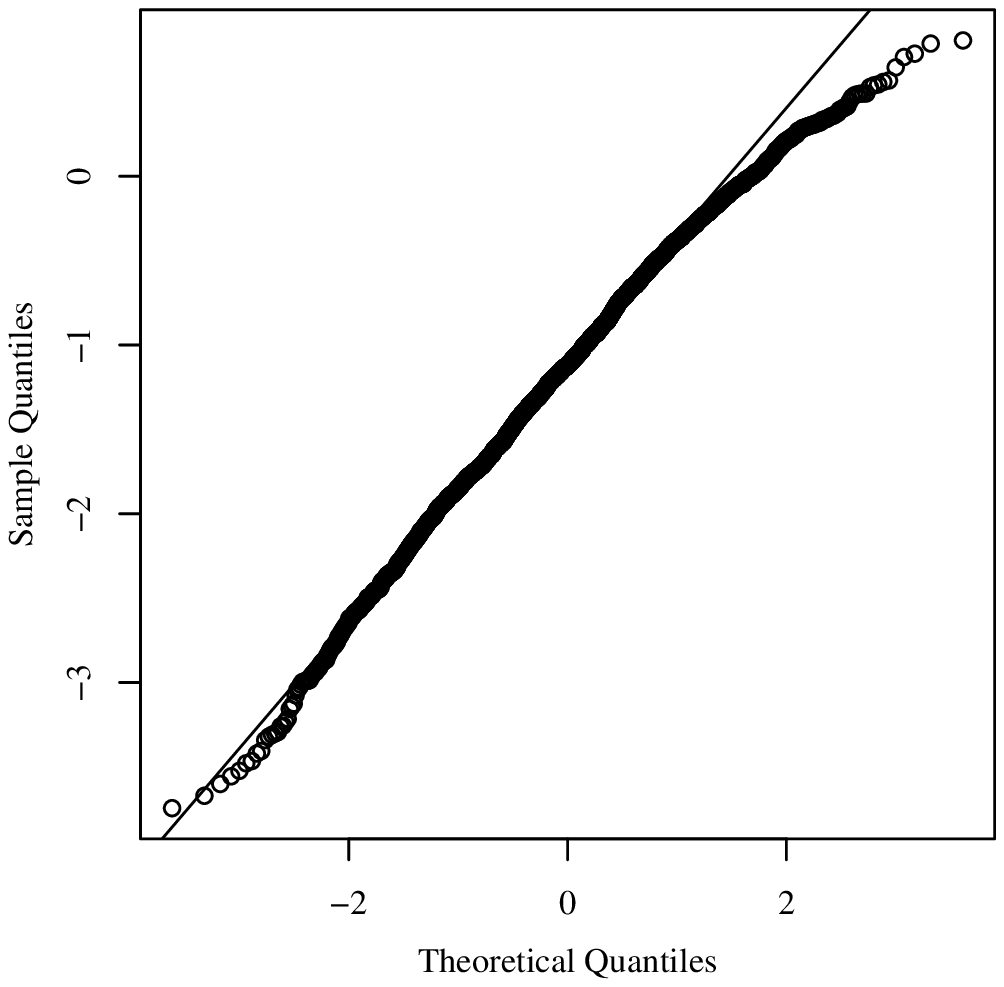}}
    \caption{QQ plot for the MT residuals for the GHN and GH$t$ models.}
    \label{fig:residuals1}
\end{figure}


\subsection{Investments in education}\label{sec:5.1}

\noindent
In this subsection, data on education investments are used to illustrate the proposed methodology. We consider the investments made by municipalities of two Brazilian states: Sao Paulo (SP) and Minas Gerais (MG). This data set was obtained from the Brazilian National Fund for Educational Development (FNDE) website\footnotemark[1], which is a federal agency under the Ministry of Education. The origin of these investments comes from the Fund for the Maintenance and Development of Basic Education and the Valorization of Education Professionals (FUNDEB). The resources are distributed to 27 federative units (26 states plus the Federal District), according to the number of students enrolled in their basic education network. This rule is established for the previous year's school census data, e.g., 2018 resources were based on 2017 student numbers. This method helps to better distribute resources across the country, as it takes into account the size of education networks.

The variable of interest is education investments with 1503 observations, of which 102 (7\%) correspond to unobserved investment values identified as zero investment. The explanatory variables considered in the study were: $revenue$\footnotemark[3] represents per capita revenue collected by the municipality; $gnp$ \footnotemark[3] is the Gross National Product of the municipality; $distribute$\footnotemark[2] is a dummy variable indicating if the municipality receives resources from the Financial Compensation for Exploration of Mineral Resources (CFEM); this resource must be destined to investments in the areas of health, education and infrastructure for the community; $sp$ is an indicator variable for state ($sp$ receives value 1); $enrollment$\footnotemark[4] is the school census enrollment numbers.

As in the previous study, the response variable, investment in education, is in the logarithm scale $Y_{i}^{*}=lninvest$. The latent variable ($U_{i}^{*} = dinvest$) denotes the willingness of the $i$th municipality to invest education; $U_i=\mathds{1}_{\{U_{i}^{*}>0\}}$ corresponds to the decision or not of the $i$th municipality to invest in education.

\footnotetext[1]{Filtered data are available at \hbox{https://repositorio.shinyapps.io/plataforma\_de\_dados\_municipais}.}

\footnotetext[2]{\hbox{https://dados.gov.br/dataset/sistema-arrecadacao}.}

\footnotetext[3]{\hbox{http://www.ipeadata.gov.br/Default.aspx}.}

\footnotetext[3]{\hbox{https://www.ibge.gov.br/estatisticas/sociais/populacao/9103-estimativas-de-populacao.html?=\&t=resultados}.}

\footnotetext[4]{\hbox{https://www.gov.br/inep/pt-br/areas-de-atuacao/pesquisas-estatisticas-e-indicadores/censo-escolar}.}

The descriptive statistics for the investments in education are reported in Table~\ref{tab:estdesc_app1_ie}. From this table, we note the following: the mean is almost equal to the median; a very small negative skewness value, and a high kurtosis value. The symmetric nature of the data is confirmed by the histogram shown in Figure Figure~\ref{fig:hist_box_app1_ie}(a). The boxplot shown in Figure~\ref{fig:hist_box_app1_ie}(b) indicates potential outliers. Therefore, we observe that a Student-$t$ model is a reasonable assumption.

\begin{table}[!ht]
\centering
\caption{Summary statistics for the education investment data.}
\label{tab:estdesc_app1_ie}
\begin{tabular}{ccccccccccccccccc}
\hline
 minimum &  Mean   & Median  & maximum  &  SD     &  CV       &  CS    &  CK    &  $n$ \\
4.271    & 12.212    & 12.575 & 18.511   & 1.903   &  15.587\%  & -0.5864 &  3.70 &  1401 \\
\hline
\end{tabular}
\end{table}

\begin{figure}[h!]
\vspace{0.25cm}
\subfigure[]{\includegraphics[height=7cm,width=7.25cm]{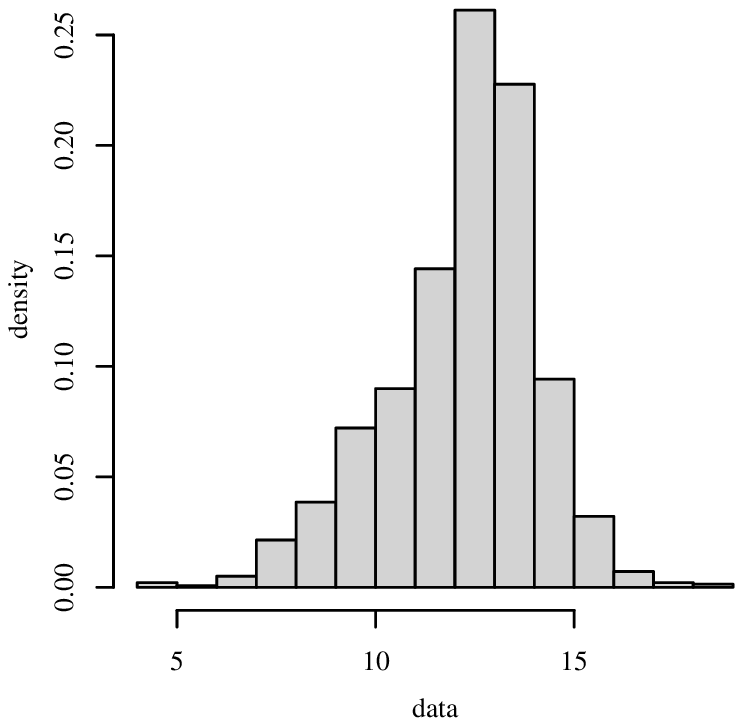}}
\subfigure[]{\includegraphics[height=7cm,width=7.25cm]{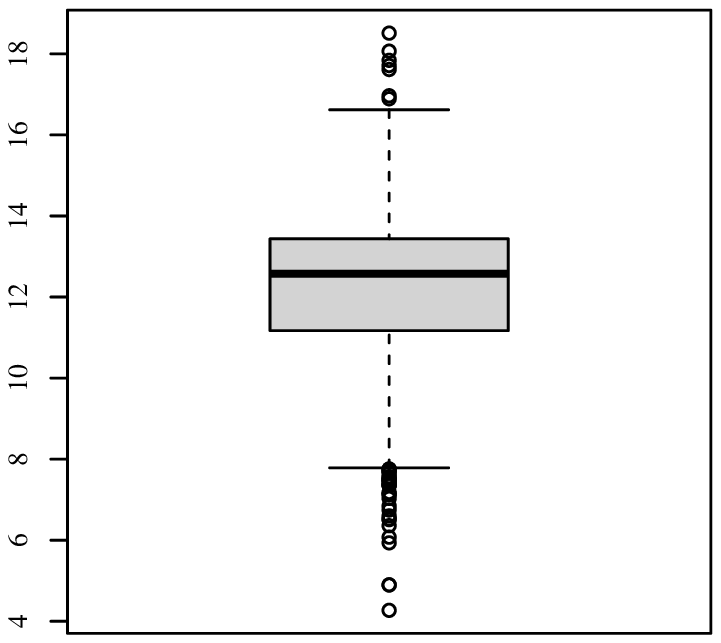}}
\caption{\small Histogram  (a) and boxplots (b) for the education investments data.}
  \label{fig:hist_box_app1_ie}
\end{figure}

We then analyze the education investment data using the GH$t$ model, expressed as
\begin{equation}
\label{eq1}
dinvest = \beta_0+ \beta_1\, revenue_i + \beta_2\, sp_i  + \beta_3\, enrollment_i,
\end{equation}
\begin{equation}
\label{eq2}
lninvest = \gamma_0+\gamma_1\, revenue_i + \gamma_2\, sp_i+\gamma_3\, enrollment_i+ \gamma_4\, gnp_i,
\end{equation}
\begin{equation}
    \log \sigma_i = \lambda_0 + \lambda_1\, revenue_i + \lambda_2\, sp_i,
\end{equation}
\begin{equation}
\text{arctanh}\, \rho_i = \kappa_0 + \kappa_1 \, revenue_i + \kappa_2\, distribute_i.
\end{equation}

Table \ref{AIC_BIC_IE} reports the AIC and BIC of the CHN, GHN and GH$t$ models. From Table
\ref{AIC_BIC_IE}, we observe that the GH$t$ model provides better adjustment than other
models based on the values of AIC and BIC.

\def\tablename{Table}
\begin{table}[h!]
\caption{AIC and BIC of the indicated Heckman models.}
\centering
\small
\begin{tabular}{lllll}
\hline

    & CHN & GHN      & GH$t$    \\
\hline
AIC &  8269.058   & 5873.579 & 5719.590\\
BIC &  8306.147   & 5953.055 & 5804.365 \\
\hline
\end{tabular}
\label{AIC_BIC_IE}
\end{table}

Table \ref{tab_invest} presents the estimation results of the GHN and GH$t$ models. From this table, we note that the explanatory variables $revenue$ and $sp$ associated with the dispersion are significant, in both models, suggesting the presence of heteroscedasticity. For the explanatory variables related to the correlation parameter, $revenue$ is significant at the 5\% and 10\% levels in the GHN and GH$t$ models, respectively, while $distribute$ is significant at the 10\% and 5\% levels in the GHN and GH$t$ models, respectively. Therefore, there is evidence of the presence of
sample selection bias in the data.

From Table \ref{tab_invest}, we observe that in the outcome equation $revenue$, $sp$, $enrollment$ and $gnp$ are significant, according to the GHN and GH$t$ models. Note that increasing $revenue$ by one unit is associated with
$(\exp(-0.2683)-1)\times100 = -23.53\%$ and $(\exp(-0.2167)-1)\times100 = -19.48\%$ decrease in the average investment in education according to the GHN and GH$t$ models, respectively. Note also that when the municipality is located in the state of Sao Paulo ($sp$), the average investment in education increases by $(\exp(1.0714)-1)\times100 = 191.95\%$ and $(\exp(1.0063)-1)\times100 = 173.55\%$, according to the GHN and GH$t$ models, respectively. These results show that the municipalities of the state of Sao Paulo invest much more in education than the municipalities of the state of Minas Gerais.

In the selection equation, we observe that the explanatory variables $revenue$, $sp$ and $enrollment$ are significant in the GH$t$ model. On the other hand, $revenue$ and $sp$ are not significant in the GHN model; see Table \ref{tab_invest}. We observe that, for the GH$t$ case, the odds ratio for $sp$ is $\exp(0.3674) = 1.4439$, that is, the likelihood of a municipality to spend on medical services increases by $((1.4439 -1) \times 100)=44.39\%$ when the municipality is located in the state of Sao Paulo

\def\tablename{Table}
\begin{table}[h!]
\caption{Estimation results of the GHN and GH$t$ models.}
\centering
\scriptsize
\begin{tabular}{lrrrrrrrrrrrrrrrrrr}
\hline
\\
\vspace{-.5cm}
\\
\multicolumn{9}{c}{\small{Probit selection equation}}
\vspace{0.1cm}
\\
\hline

\multirow{2}{*}{Variables} & \multicolumn{2}{c}{Estimate} & \multicolumn{2}{c}{Std. Error} & \multicolumn{2}{c}{t Value} & \multicolumn{2}{c}{p-Value}       \\
                           & GHN           & GH$t$        & GHN            & GH$t$         & GHN          & GH$t$        & GHN             & GH$t$           \\
\hline
$(intercept)$              & 1.0523      & 1.4222    & 0.1013       & 0.0955      & 10.3870       & 14.8880       & \textless{}0.01 & \textless{}0.001 \\
$revenue$                   & -0.0058     & -0.0530    & 0.0223       & 0.0164     & -0.2630       & -3.2330       & 0.7920           & \textless{}0.001 \\
$sp$                       & 0.1065      & 0.3674     & 0.0990       & 0.1032      & 1.0750        & 3.5600         & 0.2820           & \textless{}0.001 \\
$enrollment$               & 0.0311      & 0.0537     & 0.0032      & 0.0026      & 9.6380        & 20.6150       & \textless{}0.001 & \textless{}0.001 \\

\hline
\\
\vspace{-.5cm}
\\
\multicolumn{9}{c}{\small{Outcome equation}}
\vspace{0.1cm}
\\
\hline
\multirow{2}{*}{Variables} & \multicolumn{2}{c}{Estimate} & \multicolumn{2}{c}{Std. Error} & \multicolumn{2}{c}{t Value} & \multicolumn{2}{c}{p-Value}       \\

                           & GHN           & GH$t$        & GHN            & GH$t$         & GHN          & GH$t$        & GHN             & GH$t$           \\
\hline
$(intercept)$              & 12.4789    & 12.4058   & 0.1252      & 0.1208    & 99.6480       & 102.6960      & \textless{}0.001 & \textless{}0.001 \\
$revenue$                   & -0.2683    & -0.2167  & 0.0332      & 0.0274     & -8.0640       & -7.9030       & \textless{}0.001 & \textless{}0.001 \\
$sp$                       & 1.0714     & 1.0063   & 0.1003      & 0.0881     & 10.6810       & 11.4190       & \textless{}0.001 & \textless{}0.001 \\
$enrollment$               & 0.0031     & 0.0226    & 0.0005      & 0.0025     & 5.6670        & 8.8410         & \textless{}0.001 & \textless{}0.001 \\
$gnp$                      & 0.0202     & 0.0151    & 0.0023      & 0.0005     & 8.4630        & 27.4250       & \textless{}0.001 & \textless{}0.001 \\
\hline
\\
\vspace{-.5cm}
\\
\multicolumn{9}{c}{\small{Dispersion}}
\vspace{0.1cm}
\\
\hline
\multirow{2}{*}{Variables} & \multicolumn{2}{c}{Estimate} & \multicolumn{2}{c}{Std. Error} & \multicolumn{2}{c}{t Value} & \multicolumn{2}{c}{p-Value}       \\

                           & GHN           & GH$t$        & GHN            & GH$t$         & GHN          & GH$t$        & GHN             & GH$t$           \\
\hline
$(intercept)$              & 0.5921       & 0.2947      & 0.0503       & 0.0623       & 11.7560       & 4.7250        & \textless{}0.001 & \textless{}0.001 \\
$revenue$                   & 0.0382       & 0.0546      & 0.0130        & 0.0146       & 2.9330         & 3.7180        & \textless{}0.001 & \textless{}0.001 \\
$sp$                       & -0.3142      & -0.4745     & 0.0416        & 0.0534       & -7.5470       & -8.8800       & \textless{}0.001 & \textless{}0.001 \\
\hline
\\
\vspace{-.5cm}
\\
\multicolumn{9}{c}{\small{Correlation}}
\vspace{0.1cm}
\\
\hline
\multirow{2}{*}{Variables} & \multicolumn{2}{c}{Estimate} & \multicolumn{2}{c}{Std. Error} & \multicolumn{2}{c}{t Value} & \multicolumn{2}{c}{p-Value}       \\
                           & GHN           & GH$t$        & GHN            & GH$t$         & GHN          & GH$t$        & GHN             & GH$t$           \\
\hline
$(intercept)$              & -3.7263      & -6.7669      & 0.5463        & 0.9682        & -6.8210       & -6.9890       & \textless{}0.001 & \textless{}0.001 \\
$revenue$                   & 0.1686       & 0.2274       & 0.0839        & 0.1161        & 2.0080        & -6.9890       & 0.0448          & 0.0500         \\
$distribute$               & 0.8298       & 3.1118       & 0.4411        & 0.6867        & 1.8810        & 1.9580        & 0.0602         & \textless{}0.001 \\
\hline
$\nu$                         &    \ \ \ \    -        & 3.6240        &     \ \ \ \    -        & 0.4320         &  \ \ \ \    -         & 8.3900         &     \ \ \ \   -          & \textless{}0.001
\\
\hline
\end{tabular}
\label{tab_invest}
\end{table}

Figure \ref{fig:residuals2} displays the QQ plots of the MT residuals. This figure indicates that the MT residuals in the GH$t$ model shows better agreement with the reference distribution.

\begin{figure}[h!]
    \centering
    \subfigure[GHN]{\includegraphics[width=7cm, height=7cm]{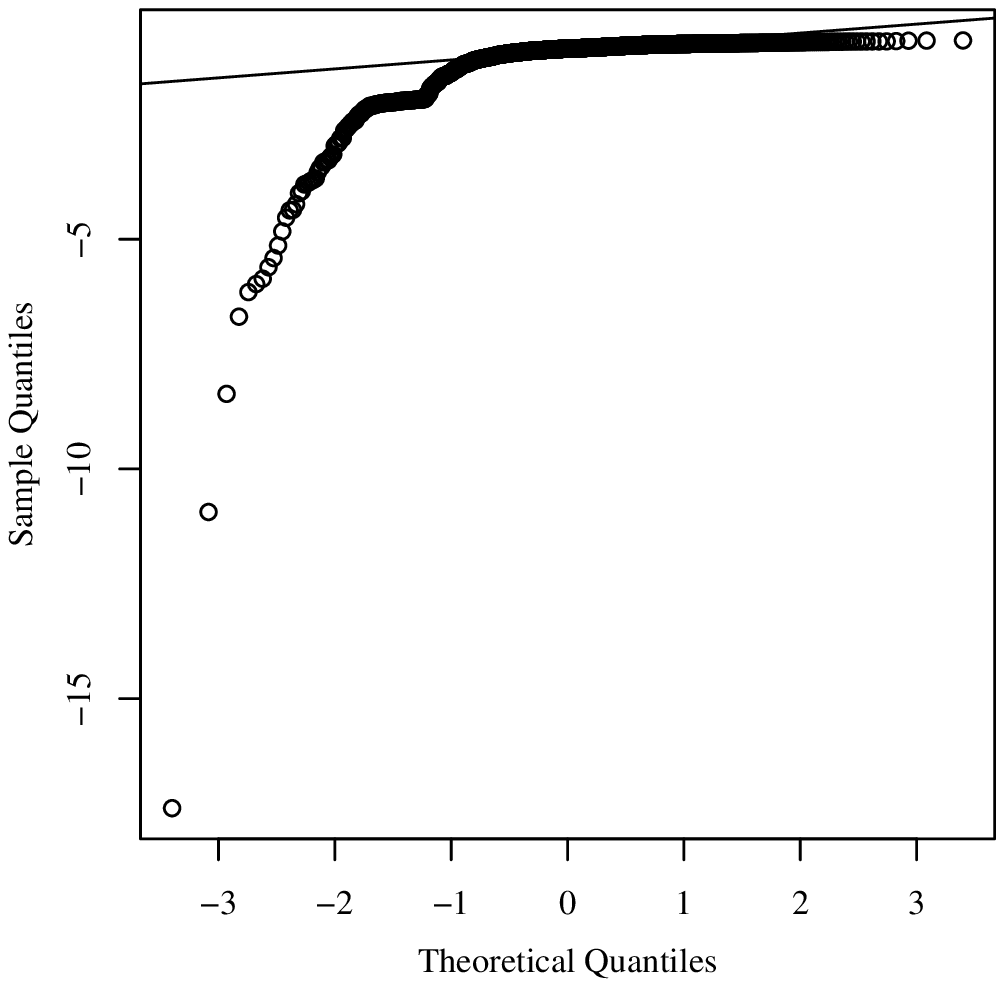}}
    \subfigure[GH$t$]{\includegraphics[width=7cm, height=7cm]{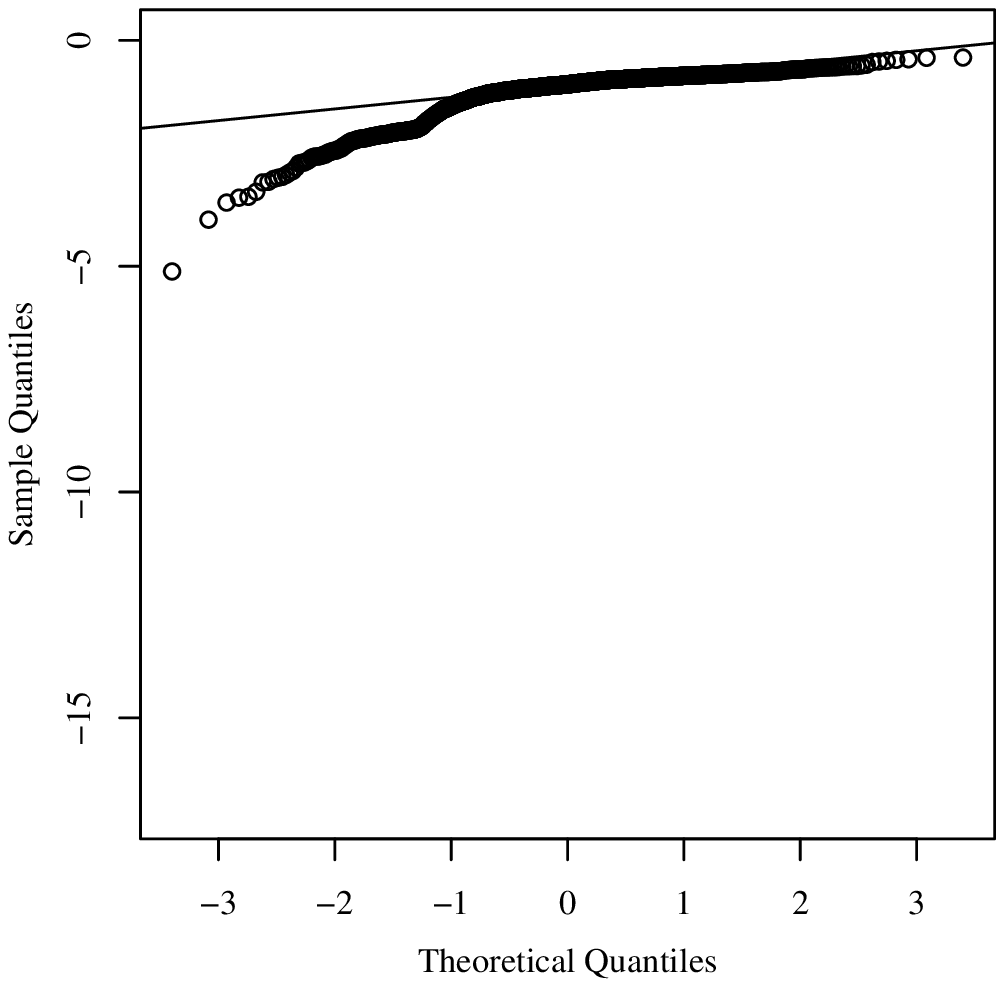}}
    \caption{QQ plot for the MT residuals for the GHN and GH$t$ models.}
    \label{fig:residuals2}
\end{figure}


\section{Concluding Remarks} \label{sec:6}
\noindent

In this paper, a class of Heckman sample selection models were proposed based symmetric distributions. In such models, covariates were added to the dispersion and correlation parameters, allowing the accommodation of heteroscedasticity and a varying sample selection bias, respectively. A Monte Carlo simulation study has showed good results of the parameter estimation method. We have considered high/low censoring rates and the presence of strong/weak correlation. We have applied the proposed model along with some other two existing models to two data sets corresponding to outpatient expense and investments in education. The applications favored the use of the proposed generalized Heckman-$t$ model over the classical Heckman-normal and generalized Heckman-normal models. As part of future research, it will be of interest to propose sample selection models based on skew-symmetric distributions. Furthermore, the behavior of the Wald, score, likelihood ratio and gradient tests can be investigated. Work on these problems is currently in progress and we hope to report these findings in future.

\normalsize


\begin{thebibliography}{}

\bibitem[Abdous, 2005]{Adous2005}
Abdous, B., F. A.-L. G.~K. (2005).
\newblock Extreme behaviour for bivariate elliptical distributions.
\newblock {\em The Canadian Journal of Statistics}, 33:317--334.

\bibitem[Balakrishnan and Lai, 2009]{Balai:09}
Balakrishnan, N. and Lai, C.~D. (2009).
\newblock {\em Continuous Bivariate Distributions}.
\newblock Springer-Verlag, New York, NY.

\bibitem[Bastos and Barreto-Souza, 2021]{bastosbarretosouza:21}
Bastos, F.~S. and Barreto-Souza, W. (2021).
\newblock {B}irnbaum-{S}aunders sample selection model.
\newblock {\em Journal of Applied Statistics}, 48:1896--1916.

\bibitem[Bastos et~al., 2021]{Bastos2021}
Bastos, F.~S., Barreto-Souza, W., and Genton, M.~G. (2021).
\newblock A generalized heckman model with varying sample election bias and
  dispersion parameters.
\newblock {\em Statistica Sinica}.

\bibitem[Ding, 2014]{Ding2014}
Ding, P. (2014).
\newblock Bayesian robust inference of sample selection using selection-t
  models.
\newblock {\em Journal of Multivariate Analysis}, pages 124:451--464.

\bibitem[Fang et~al., 1990]{fkn:90}
Fang, K.~T., Kotz, S., and Ng, K.~W. (1990).
\newblock {\em Symmetric Multivariate and Related Distributions}.
\newblock Chapman and Hall, London, UK.

\bibitem[Heckman, 1976]{heckman76}
Heckman, J.~J. (1976).
\newblock The common structure of statistical models of truncation, sample
  selection and limited dependent variables and a simple estimator for such
  models.
\newblock {\em Annals of Economic and Social Measurement}, 5:475--492.

\bibitem[Heckman, 1979]{heckman79}
Heckman, J.~J. (1979).
\newblock Sample selection bias as a specification error.
\newblock {\em Econometrica}, 47:153--161.

\bibitem[Lachos et~al., 2021]{lachosetal21}
Lachos, V.~H., Prates, M.~O., and Dey, D.~K. (2021).
\newblock Heckman selection-t model: Parameter estimation via the em-algorithm.
\newblock {\em Journal of Multivariate Analysis}, 184:104737.

\bibitem[Leung and Yu, 1996]{Yu1996}
Leung, S.~F. and Yu, S. (1996).
\newblock On the choice between sample selection and twopart models.
\newblock {\em Journal of Econometrics}, pages 72:197--229.

\bibitem[Manning et~al., 1987]{Manning1987}
Manning, W., Duan, N., and Rogers, W. (1987).
\newblock Monte carlo evidence on the choice between sample selection and
  two-part models.
\newblock {\em Journal of Econometrics}, pages 35:59 --82.

\bibitem[Marchenko and Genton, 2012]{Genton2012}
Marchenko, Y.~V. and Genton, M.~G. (2012).
\newblock A heckman selection-t model.
\newblock {\em Journal of the American Statistical Association}, 107:304--317.

\bibitem[Nelson, 1984]{Nelson1984}
Nelson, F.~D. (1984).
\newblock Eciency of the two-step estimator for models with endogenous sample
  selection.
\newblock pages 24:181 -- 196.

\bibitem[Ogundimu and Hutton, 2016]{Ogundimu_2016}
Ogundimu, E.~O. and Hutton, J.~L. (2016).
\newblock A sample selection model with skew-normal distribution.
\newblock {\em Scandinavian Journal of Statistics}, pages 43:172--190.

\bibitem[Paarsch, 1984]{Paarsch1984}
Paarsch, H.~J. (1984).
\newblock A monte carlo comparison of estimators for censored regression
  models.
\newblock pages 24:197--213.

\bibitem[{R Core Team}, 2022]{rmanual}
{R Core Team} (2022).
\newblock {\em R: A Language and Environment for Statistical Computing}.
\newblock R Foundation for Statistical Computing, Vienna, Austria.

\bibitem[Stolzenberg and Relles, 1990]{Stolzenberg1990}
Stolzenberg, R.~M. and Relles, D.~A. (1990).
\newblock heory testing in a world of constrained research design: The
  significance of heckman's censored sampling bias correction for
  nonexperimental research.
\newblock {\em Sociological Methods \& Research}, pages 18:395--415.

\bibitem[Therneau et~al., 1990]{tgf:90}
Therneau, T., Grambsch, P., and Fleming, T. (1990).
\newblock {Martingale-based residuals for survival models}.
\newblock {\em Biometrika}, 77:147--160.

\bibitem[Weisberg, 2014]{weisberg:14}
Weisberg, S. (2014).
\newblock {\em Applied Linear Regression}.
\newblock John Wiley \& Sons, Hoboken, New Jersey, fourth edition edition.

\end{thebibliography}

\end{document}